\documentclass[journal]{IEEEtran}

\usepackage{cite}          
\usepackage{amsmath,amssymb,amsfonts,amsthm}
\newtheorem{proposition}{Proposition}
\newtheorem{remark}{Remark}
\usepackage{algorithm}
\usepackage{mathtools}
\usepackage{algorithmic}
\usepackage{subcaption}
\usepackage{booktabs}
\usepackage{multirow}
\usepackage{xcolor}
\usepackage{colortbl}
\usepackage{graphicx}
\usepackage{caption}
\usepackage{enumitem}
\usepackage{url}
\usepackage{hyperref}
\usepackage{cleveref}
\newcommand{\revise}[1]{\textcolor{black}{#1}}

\begin{document}

\title{MIDA: Multi-Interest Decoding with Adaptive Aggregation for Session-based Recommendation}

\author{Linjiang~Guo,
        Nitin~Bisht,
        Shiqing~Wu,
        and~Guandong~Xu%
\thanks{Linjiang~Guo and Nitin~Bisht contribute equally to this work.}%
\thanks{L. Guo and N. Bisht are with the University of Technology Sydney, Sydney, Australia.
E-mail: \{linjiang.guo, nitin.bisht\}@student.uts.edu.au.}%
\thanks{S. Wu is with the City University of Macau, Macau SAR. E-mail: sqwu@cityu.edu.mo.}%
\thanks{G. Xu is with the Education University of Hong Kong, Hong Kong SAR. E-mail: gdxu@eduhk.hk.}%
\thanks{S. Wu and G. Xu are corresponding authors.}}

\maketitle

{\footnotesize
\noindent
\textcopyright{} 2026 IEEE.  Personal use of this material is permitted.  Permission from IEEE must be obtained for all other uses, in any current or future media, including reprinting/republishing this material for advertising or promotional purposes, creating new collective works, for resale or redistribution to servers or lists, or reuse of any copyrighted component of this work in other works.

\par\medskip
\noindent
This is the author-accepted manuscript of an article
published in \emph{IEEE Transactions on Knowledge
and Data Engineering (TKDE)}.

The final published version is available at
\href{https://ieeexplore.ieee.org/document/11738675/}
{IEEE Xplore}.

DOI: \href{https://doi.org/10.1109/TKDE.2026.3740693}{10.1109/TKDE.2026.3740693}.
\par\medskip
}

\begin{abstract}
Session-based recommendation (SBR) infers user interest from limited
interaction sequences, yet current models show a consistent performance
drop when longer histories are available.  This degradation is associated
with decoder designs that overemphasize recent items while underutilizing
distant, yet semantically relevant, interactions, leading to poor modeling of
long-range dependencies and multi-interest behaviors.  To address this
challenge, we introduce \textbf{MIDA} (\textbf{M}ulti-\textbf{I}nterest 
\textbf{D}ecoding with \textbf{A}daptive Aggregation for Session-based Recommendation), a
lightweight and encoder-agnostic framework that provides a reformulation of
the decoding stage.  The central design principle is a \emph{two-space decoupling}: clustering
operates on raw item embeddings so that semantic groupings are not distorted
by encoder-learned relational biases; encoder outputs are reserved for
aggregation, where sequential context remains intact.  Concretely, MIDA
(i)~employs differentiable multi-interest clustering on raw embeddings,
preserving semantic structure that encoder representations would otherwise
obscure, and (ii)~fuses stability and plasticity branches through a gate
conditioned on average cluster size, so that the inferred clustering
structure directly governs aggregation behavior through a structural coupling
across the two stages.
Extensive experiments on multiple benchmark datasets demonstrate that MIDA
consistently outperforms state-of-the-art methods, with meaningful accuracy
improvements while remaining computationally practical.
\end{abstract}

\begin{IEEEkeywords}
Session-Based Recommendations, Dynamic Aggregation, Multi-Interest,
Session Representation
\end{IEEEkeywords}

\section{Introduction}
\label{sec:intro}

Recommendation systems (RS)~\cite{covington2016deep,wang2020time,wu2022graph}
have become ubiquitous in modern online platforms, delivering personalized
and timely content by learning from user behavior.  In practice, persistent
user profiles are often unavailable, or only interactions within limited time
windows can be observed.  Session-based Recommendation
(SBR)~\cite{guo2019streaming,li2024graph,choi2025linear} provides an
effective solution: it infers user interest from the sequence of interactions
in a single session and predicts the next item of interest, with particular
relevance to e-commerce, news, and video streaming, where users browse
anonymously or interact through temporary identifiers.

\begin{figure}[t]
  \centering
  \begin{subfigure}{0.49\columnwidth}
    \includegraphics[width=\columnwidth]{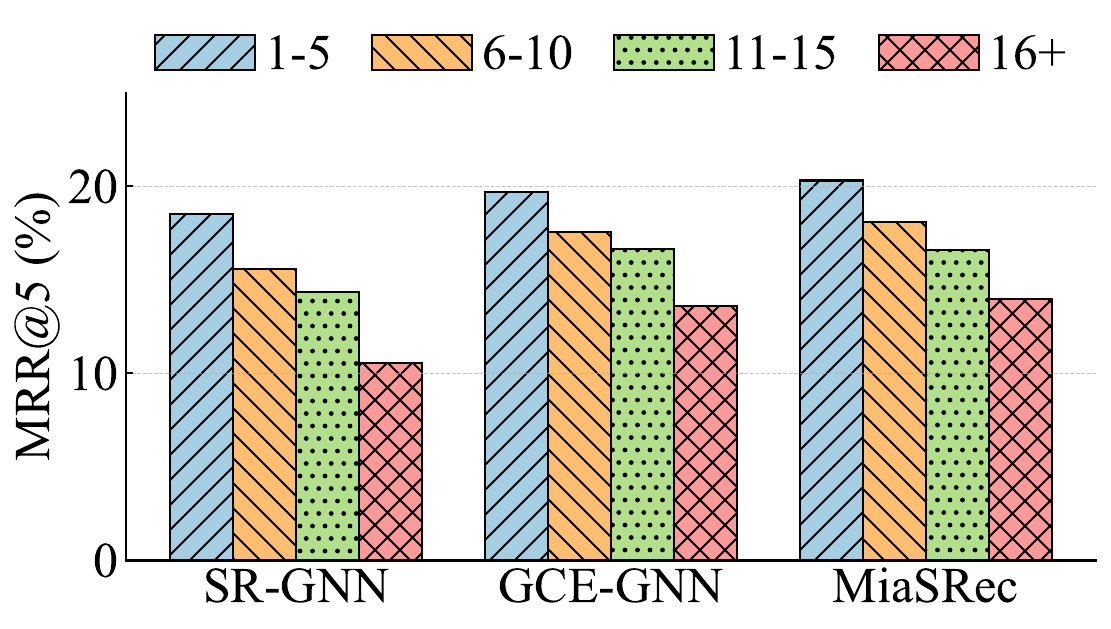}
  \end{subfigure}
  \hfill
  \begin{subfigure}{0.49\columnwidth}
    \includegraphics[width=\columnwidth]{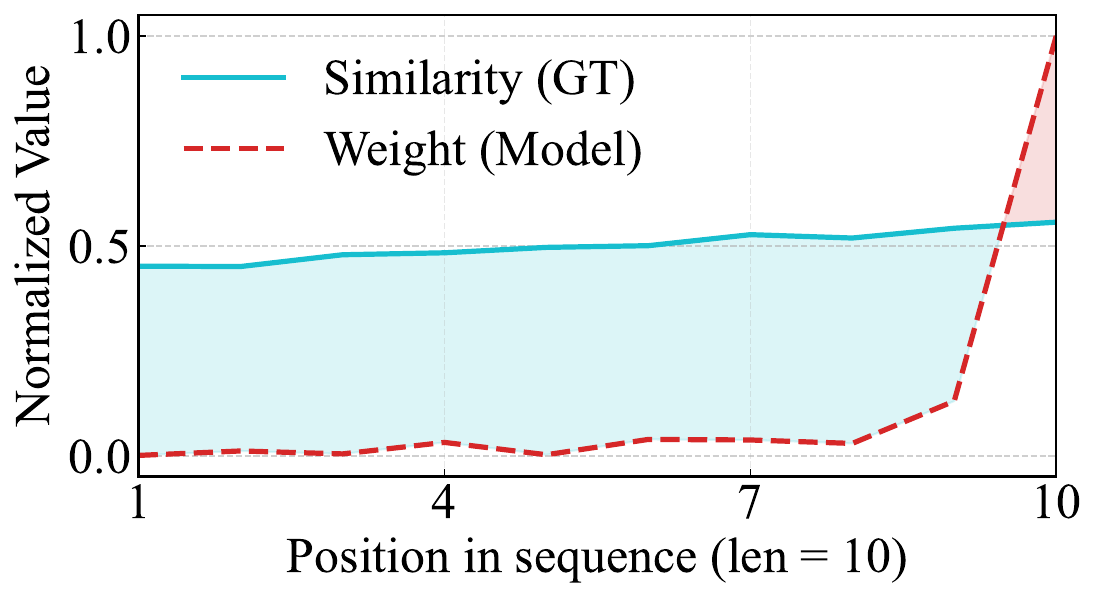}
  \end{subfigure}
  \caption{Analysis of decoder limitations in existing session-based RS.
    \emph{Top}: MRR@5 as a function of session length across state-of-the-art
    models, with longer sessions yielding lower accuracy despite containing richer
    information.
    \emph{Bottom}: Attention weight distribution (red dashed) versus
    ground-truth item cosine similarity to the target (blue solid) in a
    length-10 session, providing evidence for a strong recency bias.
    Results are generated on the Diginetica dataset using an SR-GNN backbone;
    attention weights correspond to the single-layer decoder attention.}
  \label{fig:empirical}
\end{figure}

Recently, SBR has been advanced through graph-based~\cite{wu2019session,zhang2020personalized}
and transformer-based architectures~\cite{zhuo2024multi} that adopt an
encoder-decoder framework, enabling more effective modeling of complex item
dependencies.  Prior research concentrates heavily on the encoder mechanism
to capture both local and global relationships among
items~\cite{zheng2024heterogeneous,wang2024spatial,ou2025ls}. \revise{Related multi-intent methods further enhance session representations by constructing or selecting multiple intent representations within model-specific architectures~\cite{choi2024multi,li2022enhancing,shen2023temporal,lv2025dynamic}.} The decoder
mechanism, which ultimately determines how encoded information is transformed
into user preference representations, has received comparatively little
attention.

Some recent research utilizes pooling-based
methods~\cite{arandjelovic2016netvlad,choi2024multi}, which compress all
items into a fixed-length embedding.  While simple and efficient, this
approach inevitably discards sequential and item-level distinctions, becoming
particularly harmful in long sessions where diverse user interests emerge.  As
a result, pooling-based decoders struggle to preserve the temporal information
necessary for accurate next-item prediction.  Another line of work focuses on
point-wise approaches~\cite{ou2025ls}, which rely on attention or gating
anchored on the last item to balance short-term and long-term intents.
However, this design overweights recent interactions while underutilising
earlier but semantically relevant ones, ultimately failing to capture the
multi-interest nature of long sessions.

These limitations raise an important concern: do existing decoders truly
exploit the richer historical context available in long sessions?  To
investigate, we conduct an empirical analysis on representative models, as
shown in Fig.~\ref{fig:empirical}.  Point-wise approaches assign
disproportionate weight to the last few items while ignoring earlier ones;
pooling-based approaches blur temporal signals across the sequence.  Both
weaknesses manifest in a consistent empirical pattern: when sessions are
grouped by length, longer sessions produce lower accuracy across
state-of-the-art models, despite containing richer historical information.
Notably, this degradation is observed across architecturally distinct model
families (GRU-based, graph-based, and Transformer-based) that differ in
every encoder design dimension yet share the same decoder paradigm.  This
cross-architecture consistency cannot be attributed to any specific encoder
design; rather, it implicates the shared decoder mechanism as the primary
source of underperformance on longer sessions.  To isolate the decoder's
contribution, we adopt a controlled evaluation protocol (Section~\ref{sec:experiments}):
we hold each encoder fixed and vary only the decoder component across all
baselines, enabling a direct assessment of decoder-side impact while factoring
out encoder differences.  The central research question is therefore:
\emph{How can we design a lightweight, encoder-agnostic decoder that
effectively exploits diverse historical contexts to model user interests?}

To answer this, we introduce \textbf{MIDA} (Multi-Interest Decoding with Adaptive
Aggregation for Session-based Recommendation) \footnote{Our code is available at: \url{https://github.com/downw/MIDA_Rec}}, a lightweight and
encoder-agnostic framework that provides a reformulation of the decoding
stage to tackle the above limitations by modeling both persistent and
evolving aspects of user intent.  The two empirical failure modes of existing decoders (pooling methods that
conflate semantically distinct items and point-wise methods that discard all
but the most recent interactions) translate into two concrete design challenges: (a)~How can we
capture diverse persistent interests that are semantically related but
temporally distant?  (b)~How can we balance stability (persistent intent)
with plasticity (evolving intent) in user behavior?

To address these challenges, MIDA employs \emph{dynamic multi-interest
clustering}, which groups semantically related items via differentiable
similarity thresholds, thereby preserving long-range dependencies and
capturing diverse user preferences that pooling-based methods collapse.  It
also introduces \emph{adaptive aggregation}, which disentangles user intent
into stability and plasticity and fuses them through a gating mechanism
conditioned on both branch representations and semantic cohesion signals
(i.e.\ average cluster size), enabling fine-grained adaptation to consistent
and evolving behavioral signals.  Together, these components yield a decoder
that efficiently utilizes history information, retains temporal structure, and
models the multi-interest nature of complex sessions.  The framework is
illustrated in Fig.~\ref{fig:framework}.

The contributions of this work are as follows.
\begin{enumerate}[leftmargin=*]
  \item We identify the \emph{decoder bottleneck} in SBR, namely the systematic
        tendency of existing decoders to underutilize long-range historical
        context, and provide a lightweight, encoder-agnostic reformulation of
        the decoding stage that directly addresses this limitation.
  \item A differentiable clustering mechanism with regularisation is proposed
        for dynamic interest extraction, together with a stability--plasticity
        aggregation scheme that disentangles persistent and evolving user
        intents, emphasizing the \emph{decoder-side perspective}.
  \item The framework incorporates a two-stage decoder design in which semantic
        similarity computation is decoupled from relational representation:
        Dynamic Multi-Interest Clustering operates on raw embeddings to
        identify similar items across temporal distances, while Adaptive
        Aggregation uses encoder outputs to construct attention-based session
        representations that capture sequential dependencies.
        The decoder is encoder-agnostic: it replaces only the final
        aggregation step of any backbone without modifying encoder parameters.
  \item Comprehensive experiments on multiple benchmark datasets show that
        MIDA outperforms strong baselines across diverse encoder architectures.
\end{enumerate}

\section{Related Work}
\label{sec:related}

\paragraph{Session-based recommendation.}
Traditional SBR methods typically rely on heuristic or shallow models such as
Markov chains to model item
transitions~\cite{rendle2010factorizing}.  Recent studies increasingly focus
on deep neural architectures, including recurrent
networks~\cite{quadrana2017personalizing,ren2019repeatnet}, attention
mechanisms~\cite{li2017neural}, and graph neural
networks~\cite{wu2019session,xia2021self,choi2025linear}, which effectively
encode complex transition dependencies within sessions.  Several works aim to
integrate both local and global item relations~\cite{qiu2019rethinking}, or
leverage contrastive learning to enhance representation
robustness~\cite{xia2021self}.  Further explorations extend to
Transformer-based frameworks~\cite{wilm2023scaling,zheng2024harnessing} and
memory-augmented
architectures~\cite{wang2023exploiting,li2023repetition}.  Side information
has also emerged as a promising direction, including temporal
signals~\cite{wan2023spatio}, item
descriptions~\cite{zheng2024harnessing}, user
behaviors~\cite{yang2022multi,han2024efficient}, and category
information~\cite{wang2024knowledge}.  Across these studies, the encoder
receives the bulk of attention, while the decoder remains largely fixed.

\paragraph{Multi-interest modeling.}
There is growing interest in capturing multi-interest structures within
sessions.  One approach segments item embeddings into multiple parts, each
representing a distinct interest~\cite{li2022enhancing}.  Another treats
items as multi-aspect entities mapped into multiple candidate spaces
corresponding to different aspects~\cite{choi2024multi,shen2023temporal}.
Dynamic interest adaptation mechanisms for automatically inferring the number
of interests have also been
explored~\cite{lv2025dynamic}.  MIDA is conceptually related to these lines
of work, but differs in two important ways.  First, it employs a dynamic,
differentiable clustering mechanism based on within-session semantic
similarity, grouping interests adaptively without a preset number or heavy
overhead.  Second, it applies multi-interest modeling at the \emph{decoder
stage} and further disentangles persistent from evolving intents, enabling
richer utilisation of historical information.

\paragraph{Decoder methods in recommendation.}
Decoder-based recommendation methods differ in how user preferences are
generated.  Session-based decoders may form a user interest representation
from the last interacted item~\cite{pan2020star}, global session
aggregation~\cite{zheng2020dgtn}, or temporal exponential decay
sum~\cite{deng2022g}.  CTR prediction decoders score each item in a fixed
candidate set via binary classifiers~\cite{si2024twin}.  Sequential
recommendation approaches such as LightGCN~\cite{he2020lightgcn} bypass
explicit decoding by computing relevance via neighborhood embedding
aggregation.  Recent work employs Deep Q-Network agents to optimize graph
connectivity~\cite{zeng2024fair} or multi-scale rescaling of positional
indices to mitigate the ``lost-in-the-middle'' problem~\cite{zhang2024found}.
Accurately capturing user interests is especially important because scoring
every candidate item is impractical; imprecise interest representations can
mislead the decoder~\cite{klenitskiy2024does}, particularly when user intents
shift during a session~\cite{wang2024comprehensive,yoo2025embracing}.  MIDA
addresses these challenges by reformulating the decoder as a dual-branch
system grounded in within-session semantic structure.

Unlike existing methods that either treat the encoder and decoder jointly or
focus on encoder redesign, MIDA treats the decoder as a {first-class
component} while remaining encoder-agnostic.  Specifically, MIDA replaces
only the final aggregation module of a backbone (e.g.\ the attention-based
pooling in SR-GNN/GCE-GNN or the output projection in GRU4Rec), preserving
all encoder parameters and training schedules.  This modular design enables
straightforward integration across heterogeneous encoder families.  The combination of differentiable threshold-based clustering (without
a pre-specified number of interests), a stability and plasticity dual-query design,
and a semantically conditioned gating mechanism distinguishes MIDA from
multi-interest methods that operate at the encoder level and from decoder
methods that rely on a single fixed aggregation rule.  In particular, unlike
AttenMixer~\cite{zhang2023efficiently}, which introduces multi-level propagation
layers that can conflict with rich encoders, MIDA's decoupled query construction
explicitly separates semantic similarity (raw embeddings) from relational context
(encoder outputs), avoiding gradient interference while retaining the benefits of
both representations.  The essential novelty of MIDA lies not in assembling
independently existing modules but in a principled {two-space decoupling}:
clustering operates on raw item embeddings so that semantic groupings are not
distorted by the relational biases the encoder has learned; encoder outputs
are reserved for aggregation, preserving the sequential dependencies the
encoder provides.
This is an architectural consequence of separating {which} items are related from {how} they should be aggregated given session
context, a distinction no existing decoder draws explicitly.

\section{The Proposed Framework}
\label{sec:method}

We propose \textbf{MIDA} (\textbf{M}ulti-\textbf{I}nterest and
A\textbf{d}aptive \textbf{A}ggregation for Session-based Recommendation), a
lightweight decoder framework that models both long-range, multi-interest
dependencies and short-term preference shifts in SBR.  An overview is
illustrated in Fig.~\ref{fig:framework}.

\begin{figure*}[htbp]
  \centering
  \includegraphics[width=0.9\textwidth]{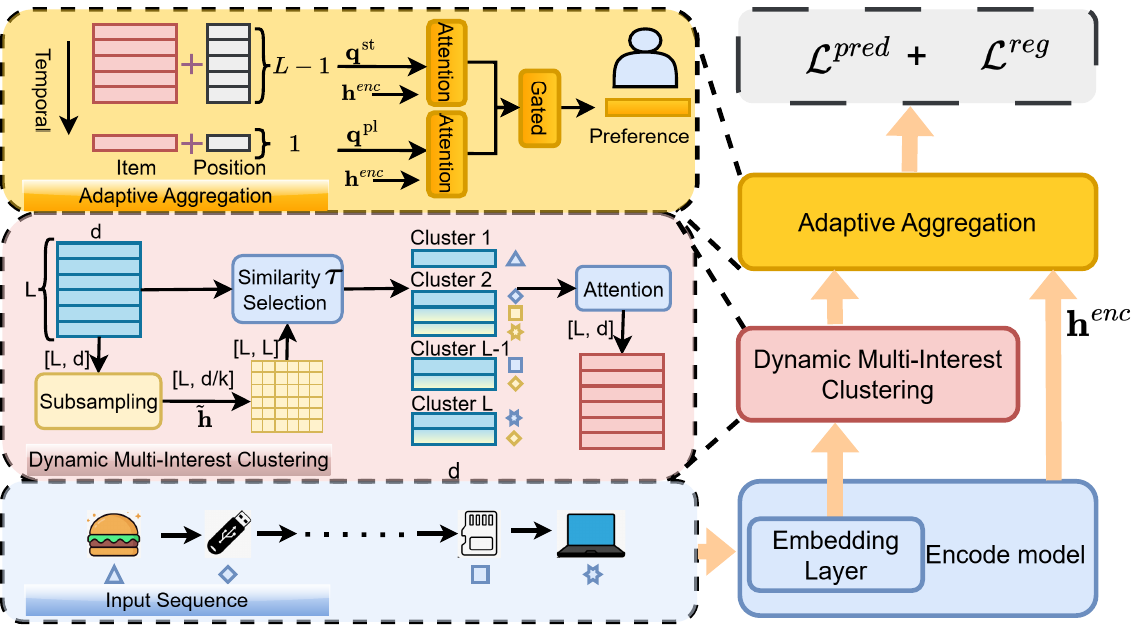}
  \caption{Overview of the proposed MIDA framework. The model first extracts
    raw embeddings from the embedding layer, which are clustered by the
    Dynamic Multi-Interest Clustering module to retain semantically related
    yet temporally distant signals under a learned threshold~$\tau$.  The
    Adaptive Aggregation module then decouples these signals into plasticity
    (evolving) and stability (persistent) intents, which interact with encoder
    outputs and are adaptively fused via a gating mechanism for final
    prediction.}
  \label{fig:framework}
\end{figure*}

\subsection{Notation}
\label{subsec:notation}

Table~\ref{tab:notation} summarises the key symbols used throughout this
paper.

\begin{table}[t]
  \caption{Key Notation}
  \label{tab:notation}
  \centering
  \begin{tabular}{cl}
    \toprule
    \textbf{Symbol} & \textbf{Meaning} \\
    \midrule
    $S$                       & Session (ordered item sequence) \\
    $L$                       & Session length ($|S|$) \\
    $d$                       & Embedding dimension \\
    $\tau$                    & Similarity threshold (global learnable parameter) \\
    $k$                       & {Embedding-dimension subsampling stride} \\
    $\mathbf{h}_i \in \mathbb{R}^d$           & Raw item embedding \\
    $\mathbf{h}^{\mathrm{enc}}_i \in \mathbb{R}^d$ & Encoder output embedding \\
    $\mathbf{M} \in \mathbb{R}^{L\times L}$  & Pairwise cosine similarity matrix \\
    $w_{ij} \in \mathbb{R}$   & Soft cluster weight between items $i$ and $j$ \\
    $C_i$                     & Cluster of item $v_i$ \\
    $e$                        & Average cluster size ($\bar{c} = \frac{1}{L}\sum_i|C_i|$) \\
    \revise{$\tilde{\mathbf{h}}'_i \in \mathbb{R}^{\lceil d/k\rceil}$} & {Subsampled, $\ell_2$-normalized embedding} \\
    $\hat{\mathbf{h}}_i \in \mathbb{R}^d$ & Cluster-enhanced embedding of item $v_i$ \\
    $\mathbf{q}^{st} \in \mathbb{R}^d$ & Stability query (sums over items $1,\dots,L{-}1$) \\
    $\mathbf{q}^{pl} \in \mathbb{R}^d$ & Plasticity query (anchored to last item $v_L$) \\
    $\mathbf{h}_S \in \mathbb{R}^d$    & Session-level representation \\
    \bottomrule
  \end{tabular}
\end{table}

\subsection{Problem Formulation}
\label{subsec:problem}

Given a session $S = [v_1, v_2, \dots, v_L]$, where each $v_i \in \mathcal{V}$
is the $i$-th interacted item and $L$ is the session length, each item $v_i$
is mapped to a raw embedding $\mathbf{h}_i \in \mathbb{R}^d$ that captures
semantic information, where $d$ is the embedding dimension.  The encoder then
produces a contextualised embedding $\mathbf{h}^{\mathrm{enc}}_i \in
\mathbb{R}^d$ that captures sequential dependencies.  The task of SBR is to
predict the next item $v_{L+1}$ given the history $S$:
\begin{equation}
\label{eq:prediction}
\begin{split}
\mathbf{h}_S &= f_{\mathrm{dec}}(\mathbf{h}^{\mathrm{enc}}),\\
p(v \mid S) &= \frac{\exp\!\left(\mathbf{h}_S \cdot \mathbf{h}_v\right)}
                    {\sum_{i\in \mathcal{V}} \exp\!\left(\mathbf{h}_S \cdot \mathbf{h}_i\right)}, \\
\hat{v}_{L+1} &= \arg\max_{v\in \mathcal{V}} p(v \mid S),
\end{split}
\end{equation}
where $f_{\mathrm{dec}}$ refers to the decoding function and
$\mathbf{h}_S \in \mathbb{R}^d$ is the session-level representation.

\subsection{Dynamic Multi-Interest Clustering}
\label{subsec:clustering}

In existing pooling-based or point-wise decoding methods, $\mathbf{h}_S$ is
either compressed into a single vector or overly focused on the last items,
limiting the ability to recover interests that recur after long temporal
gaps.  
outputs. 

{To uncover related yet temporally distant signals, we compute similarity using raw item embeddings rather than encoder outputs. This is because encoder outputs can mix item identity with backbone-specific relational and other signals (e.g. sequential dependencies or collaborative filtering signals), leading to conflation between real affinity with session context.}
Each raw embedding $\mathbf{h}_i \in \mathbb{R}^d$ is
$\ell_2$-normalized to remove magnitude effects:
\begin{equation}
\label{eq:normalize}
\mathbf{h}'_i = \frac{\mathbf{h}_i}{\|\mathbf{h}_i\|_2}.
\end{equation}

To improve computational efficiency, we apply a subsampling transform to
$\mathbf{h}'_i$, producing $\tilde{\mathbf{h}}'_i$.
\revise{Specifically, embeddings are sampled every $k$ dimensions
(e.g.\ dimensions $[1, 1{+}k, 1{+}2k, \dots]$).}
\revise{This stride is used only to form a lower-dimensional view for
pairwise similarity estimation in the clustering module; it does not subsample
session positions or truncate the encoded session representation used by the
final prediction layer.  We set $k=4$ as a practical trade-off between
clustering resolution and computational cost.}
Pairwise cosine
similarities are then computed as
\begin{equation}
\label{eq:similarity}
\mathbf{M}_{ij} = \tilde{\mathbf{h}}'_i{}^\top \tilde{\mathbf{h}}'_j,
\quad
\mathbf{M} \in \mathbb{R}^{L\times L},
\end{equation}
where $1 \le i,j \le L$ index items within the same session.  This similarity
matrix serves as the basis for forming clusters of items that are semantically
coherent even if they are far apart temporally.

\paragraph{Cluster definition.}
For each item $v_i$, its cluster is defined as
\begin{equation}
\label{eq:cluster-def}
C_i = \bigl\{\,j \;\big|\; \mathbf{M}_{ij} \ge \tau\,\bigr\},
\end{equation}
which includes all items whose cosine similarity to $v_i$ exceeds the
threshold $\tau$.  \emph{No top-$k$ truncation is applied.}  Each item is
included in its own cluster (i.e.\ $i \in C_i$), and cluster size is
naturally bounded by the session length $L$.

\paragraph{Learnable threshold.}
$\tau$ is a \emph{global learnable parameter optimized via
backpropagation}, shared across all sessions in the dataset.  Since a hard
threshold would make clustering non-differentiable, we introduce soft cluster
weights $\tilde{w}_{ij} \in \mathbb{R}$:
\begin{equation}
\label{eq:soft-cluster}
\tilde{w}_{ij} = \max\!\left\{\sigma_s\!\left(\frac{\mathbf{M}_{ij}-\tau}{\gamma}\right),\, \epsilon \right\},
\end{equation}
where $\sigma_s(\cdot)$ is the sigmoid function, $\gamma$ is a temperature
coefficient, and $\epsilon$ is a small constant that prevents weights from becoming numerically zero even for weakly related items.

\paragraph{Cluster-enhanced embeddings.}
While clustering relates each item to its semantic neighborhood, it does not
explicitly model how items within the same cluster interact.  To capture such
local co-occurrence patterns, we apply a shared position-aware attention over
the weighted neighbors.  Following the SASRec~\cite{kang2018self} design,
each item representation is first augmented with a positional
embedding:
\begin{equation}
\label{eq:pos-emb}
\tilde{\mathbf{h}}_j = \mathbf{h}_j + \mathbf{p}'_j,
\end{equation}
where $\mathbf{p}'_j \in \mathbb{R}^d$ is the positional embedding for
position $j$ in the cluster.  Attention is then computed over
$\{\tilde{\mathbf{h}}_j : j \in C_i\}$ with the soft weights
$\tilde{w}_{ij}$, and the output passes through a ReLU activation:
\begin{equation}
\label{eq:conv}
\hat{\mathbf{h}}_i
= \mathrm{ReLU}\!\left(
    \sum_{j \in C_i}
    \frac{\tilde{w}_{ij}\,\tilde{\mathbf{h}}_j}
         {\sum_{j'\in C_i}\tilde{w}_{ij'}}
  \right),\quad
\hat{\mathbf{h}}_i \in \mathbb{R}^d.
\end{equation}
Semantically related signals that may be temporally distant in the session are
thus preserved and made more salient, yielding a richer embedding space
$\{\hat{\mathbf{h}}_i\}$ for subsequent decoding.

\subsection{Adaptive Aggregation}
\label{subsec:aggregation}

A remaining challenge is how to balance user intents.  If the decoder
overemphasizes stability (persistent intents), it fails to adapt to sudden
preference shifts.  If it focuses entirely on plasticity (evolving intents),
it loses long-term context~\cite{quadrana2018sequence}.  We therefore decouple
the decoder into two branches:
\begin{itemize}[leftmargin=*]
  \item \textbf{Stability branch}: anchors persistent intents by emphasizing
        items that appear in broad clusters.
  \item \textbf{Plasticity branch}: anchors evolving intents by focusing on
        the most recent cluster output.
\end{itemize}

The \emph{stability query} $\mathbf{q}^{st} \in \mathbb{R}^d$ is computed as
a logarithmically weighted sum of cluster-enhanced embeddings:
\begin{equation}
\label{eq:stability-query}
\alpha_i = \ln\!\bigl(1 + |C_i|/2\bigr), \quad
\mathbf{q}^{st} = \sum_{i=1}^{L-1} \alpha_i\, \hat{\mathbf{h}}_i,
\end{equation}
where the logarithmic weight $\alpha_i$ grows with cluster size, and
$|C_i|/2$ accounts for the symmetry of the similarity relation (each pair
$(i,j)$ is counted twice in $\mathbf{M}$).  The sum runs to $L-1$ to
exclude the last item $v_L$, which already anchors the plasticity branch
(Eq.~\ref{eq:plasticity-query}); including it in both branches would
conflate stability and plasticity signals.  Items with larger clusters (those
that recur across the session) thus exert stronger influence in the stability
query, enabling the model to leverage extended historical context.

The \emph{plasticity query} $\mathbf{q}^{pl} \in \mathbb{R}^d$ is grounded
in the latest cluster output:
\begin{equation}
\label{eq:plasticity-query}
\mathbf{q}^{pl} = \hat{\mathbf{h}}_L.
\end{equation}
By grounding plasticity in the most recent enhanced embedding, the decoder
preserves evolving preferences as a counterbalance to the persistent signals
emphasized by stability.

Once semantic interests have been identified, aggregation should retain the sequential dependencies learned by the backbone. Accordingly, each branch query attends over the encoder outputs through dedicated attention heads:
\begin{align}
\mathbf{z}^{pl} &= \mathrm{Att}^{pl}(\mathbf{q}^{pl},\, K,\, V),
\label{eq:aggregation-pl}\\
\mathbf{z}^{st} &= \mathrm{Att}^{st}(\mathbf{q}^{st},\, K,\, V),
\label{eq:aggregation-st}\\
\mathrm{Att}(\mathbf{q}, K, V) &= \sum_{j=1}^{L}
  \mathrm{softmax}\!\left(\frac{\mathbf{q} \cdot \mathbf{K}_j}{\sqrt{d}}\right)
  \mathbf{V}_j,
\label{eq:attention}
\end{align}
where $(K, V)$ are linear transformations of
$\{\mathbf{h}^{\mathrm{enc}}_i \| \mathbf{p}_i\}_{i=1}^{L}$, comprising encoder outputs
augmented with positional embeddings.  Because the clustering-based queries
and decoder attention serve distinct objectives, directly sharing parameters
could introduce gradient interference~\cite{feng2024long}.  We adopt a
decoupled design, learning query and attention representations separately to
improve training stability and ensure semantic-relational alignment.

Finally, the two branches are fused through a gating mechanism:
\begin{equation}
\label{eq:gate}
\mathbf{g} = \sigma_s\!\bigl(\mathbf{W}_g
  [\,\mathbf{z}^{pl} \,\|\, \mathbf{z}^{st} \,\|\, \bar{c}\,]\bigr),
\end{equation}
\begin{equation}
\label{eq:fusion}
\mathbf{h}_S = \mathbf{g} \odot \mathbf{z}^{pl}
             + (1 - \mathbf{g}) \odot \mathbf{z}^{st},
\end{equation}
where $\bar{c}$ denotes the average cluster size $|C_i|$ for
$v_i \in S$.  The gate adaptively balances plasticity and stability based on
session structure: larger $\bar{c}$ indicates stronger semantic cohesion and
emphasizes persistent intents; smaller $\bar{c}$ favours evolving
preferences.

\subsection{Training Objective}
\label{subsec:training}

With the fused representation $\mathbf{h}_S$, MIDA produces the predictive
distribution over candidate items.  The prediction loss is the negative
log-likelihood of the ground-truth item $v_{L+1}$:
\begin{equation}
\label{eq:logloss}
\mathcal{L}^{\mathrm{pred}}(S) = -\log p(v_{L+1} \mid \mathbf{h}_S).
\end{equation}

Minimising $\mathcal{L}^{\mathrm{pred}}$ alone can drive $\tau$ downward,
resulting in overly dense clusters that blur semantic boundaries.  We
therefore introduce a regularisation term that penalises excessive similarity
assignments:
\begin{equation}
\label{eq:reg}
\mathcal{L}^{\mathrm{reg}}(S) = \sum_{i=1}^{L}\sum_{j=1}^{L} \tilde{w}_{ij}.
\end{equation}
Since $\tilde{w}_{ij}$ decreases monotonically with $\tau$, this term applies
upward pressure on $\tau$, ensuring clusters remain compact and semantically
meaningful.  The final objective combines both terms:
\begin{equation}
\label{eq:final-loss}
\mathcal{L}(S) = \mathcal{L}^{\mathrm{pred}}(S) + \eta\,\mathcal{L}^{\mathrm{reg}}(S),
\end{equation}
where $\eta$ is a trade-off hyperparameter.

\section{Theoretical Analysis}
To better understand the role of the learnable threshold $\tau$, we provide two structural results. The first characterizes how the regularizer shapes the clustering regime, and the second  shows MIDA’s query construction under different threshold regimes.

\begin{proposition}[Regularizer penalizes dense clustering]
\label{prop:regularizer_dense}

For a fixed session $S$ of length $L$, let
$
\tilde w_{ij}(\tau)=\max\left\{\sigma_s\!\left(\frac{M_{ij}-\tau}{\gamma}\right),\epsilon\right\},
\qquad
L_{\mathrm{reg}}(S;\tau)=\sum_{i=1}^L\sum_{j=1}^L \tilde w_{ij}(\tau),
$
where $\gamma>0$ and $\epsilon\in(0,1)$. Then $L_{\mathrm{reg}}(S;\tau)$ is continuous and non-increasing in $\tau$, and
$
\lim_{\tau\to-\infty} L_{\mathrm{reg}}(S;\tau)=L^2,
\qquad
\lim_{\tau\to+\infty} L_{\mathrm{reg}}(S;\tau)=L^2\epsilon.
$

\end{proposition}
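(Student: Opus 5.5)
The plan is to prove each claim termwise for a single pair $(i,j)$ and then sum over all $L^2$ pairs. Since $S$ is fixed, every $M_{ij}$ is a fixed real number, and $L_{\mathrm{reg}}(S;\tau)$ is a finite sum of the $L^2$ functions $\tau\mapsto \tilde w_{ij}(\tau)$. Continuity, monotonicity and limits are all preserved under finite sums. So it suffices to show that each $\tilde w_{ij}$ is continuous and non-increasing in $\tau$, tends to $1$ as $\tau\to-\infty$, and tends to $\epsilon$ as $\tau\to+\infty$.

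\textbf{Continuity and monotonicity.} Write $f_{ij}(\tau)=\sigma_s\!\left(\frac{M_{ij}-\tau}{\gamma}\right)$.
\begin{itemize}
\item The map $\tau\mapsto (M_{ij}-\tau)/\gamma$ is affine. It is strictly decreasing because $\gamma>0$.
\item The sigmoid is continuous and strictly increasing. Hence $f_{ij}$ is continuous and strictly decreasing, with derivative $-\frac{1}{\gamma}\sigma_s(1-\sigma_s)<0$.
\item The map $x\mapsto\max\{x,\epsilon\}$ is continuous (indeed $1$-Lipschitz) and non-decreasing.
\end{itemize}
Therefore $\tilde w_{ij}=\max\{f_{ij},\epsilon\}$ is continuous and non-increasing. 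It is only non-increasing rather than strictly decreasing, because it is constant equal to $\epsilon$ once $f_{ij}(\tau)\le\epsilon$. This is exactly why the statement asserts only ``non-increasing''.

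\textbf{Limits.} As $\tau\to-\infty$, the argument $(M_{ij}-\tau)/\gamma\to+\infty$, so $f_{ij}(\tau)\to 1$. Since $\epsilon<1$, for all sufficiently negative $\tau$ we have $f_{ij}(\tau)>\epsilon$, so $\tilde w_{ij}=f_{ij}\to1$.

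As $\tau\to+\infty$, $f_{ij}(\tau)\to0<\epsilon$. For all $\tau$ large enough, $\tilde w_{ij}(\tau)=\epsilon$ exactly. Explicitly, this holds for $\tau\ge M_{ij}-\gamma\,\mathrm{logit}(\epsilon)$, where $\mathrm{logit}(\epsilon)=\ln\frac{\epsilon}{1-\epsilon}$.

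Summing over the $L^2$ pairs gives the limits $L^2$ and $L^2\epsilon$. In fact the upper limit is attained for all $\tau\ge \max_{i,j}M_{ij}-\gamma\,\mathrm{logit}(\epsilon)$, which I would record as a remark.

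\textbf{Main obstacle.} Nothing here is deep; the only care needed is the interaction with the floor $\epsilon$. That floor is why monotonicity is non-strict, and why the hypothesis $\epsilon<1$ is needed for the $-\infty$ limit to equal $1$ rather than $\epsilon$.

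I would also note that boundedness of $M_{ij}$ (cosine similarities lie in $[-1,1]$) is not actually needed, since $S$ is fixed and finite. It would matter only for uniformity across sessions.
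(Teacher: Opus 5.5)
Your proposal is correct and follows the paper's proof essentially step for step. Both argue termwise: continuity and monotonicity come from the sigmoid composed with the decreasing affine map and the max with the constant $\epsilon$, these properties pass through the finite sum, and the limits $L^2$ and $L^2\epsilon$ use $\epsilon<1$ and $\epsilon>0$ respectively; your added remark that the $\tau\to+\infty$ value is actually attained for all $\tau\ge\max_{i,j}M_{ij}-\gamma\ln\frac{\epsilon}{1-\epsilon}$ is a correct sharpening that the paper does not state.
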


\noindent

\begin{remark}
Proposition~\ref{prop:regularizer_dense} shows that the regularization term assigns its largest penalty to the fully dense clustering regime. As $\tau$ decreases, more pairwise assignments are activated and the regularization cost increases monotonically, discouraging indiscriminate aggregation and favoring compact, semantically selective clusters.
\end{remark}

\begin{proof}
\label{proof:pro_1_proof}
For the proposition 1, we begin with the definition:
\[
\tilde w_{ij}(\tau)=\max\left\{\sigma_s\!\left(\frac{M_{ij}-\tau}{\gamma}\right),\epsilon\right\}.
\tag{i}
\label{eq:reg_dense_i}
\]
Since the sigmoid function is continuous and monotone, the map
$
\tau \mapsto \sigma_s\!\left(\frac{M_{ij}-\tau}{\gamma}\right)
$
is continuous and non-increasing for every pair $(i,j)$. Taking the maximum with the constant $\epsilon$ preserves both properties. Hence each $\tilde w_{ij}(\tau)$ is continuous and non-increasing in $\tau$.

Now we define:
\[
L_{\mathrm{reg}}(S;\tau)=\sum_{i=1}^L\sum_{j=1}^L \tilde w_{ij}(\tau).
\tag{iii}
\label{eq:reg_dense_iii}
\]
Because a finite sum of continuous non-increasing functions is again continuous and non-increasing, $L_{\mathrm{reg}}(S;\tau)$ is continuous and non-increasing in $\tau$.

As $\tau\to-\infty$, we have:
\[
\frac{M_{ij}-\tau}{\gamma}\to+\infty
\qquad\text{for all } i,j,
\]
so
\[
\sigma_s\!\left(\frac{M_{ij}-\tau}{\gamma}\right)\to 1.
\tag{v}
\label{eq:reg_dense_v}
\]
Since $\epsilon<1$, it follows that:
\[
\tilde w_{ij}(\tau)\to 1
\qquad\text{for all } i,j.
\tag{vi}
\label{eq:reg_dense_vi}
\]
Therefore,
\[
\lim_{\tau\to-\infty} L_{\mathrm{reg}}(S;\tau)=\sum_{i=1}^L\sum_{j=1}^L 1=L^2.
\tag{vii}
\label{eq:reg_dense_vii}
\]

Similarly, as $\tau\to+\infty$,
$
\frac{M_{ij}-\tau}{\gamma}\to-\infty
\qquad\text{for all } i,j,
$
and hence
$
\sigma_s\!\left(\frac{M_{ij}-\tau}{\gamma}\right)\to 0.
$
By construction of $\tilde w_{ij}(\tau)$,
$\tilde w_{ij}(\tau)\to\epsilon$ for all $i,j$.
Thus,
\[
\lim_{\tau\to+\infty} L_{\mathrm{reg}}(S;\tau)=\sum_{i=1}^L\sum_{j=1}^L \epsilon=L^2\epsilon.
\tag{x}
\label{eq:reg_dense_x}
\]
\end{proof}

{\begin{proposition}[Interpolation between canonical decoder biases]
\label{prop:canonical_interpolation}
Assume the raw item embeddings are $\ell_2$-normalized, so that $M_{ii}=1$ for all $i$, and let
\[
M_{\max}^{\mathrm{off}} \coloneqq \max_{i\neq j} M_{ij} < 1.
\]
Then MIDA exhibits two canonical decoder regimes:
\begin{enumerate}
    \item[(i)] If $M_{\max}^{\mathrm{off}}<\tau<1$ and the off-diagonal soft weights are negligible relative to the diagonal weight, each cluster behaves as a singleton, the stability query behaves as a weighted historical average, and the plasticity query remains anchored to the last item.
    \item[(ii)] If $\tau\to-\infty$, then all soft weights satisfy $\tilde w_{ij}\to 1$, every cluster becomes dense, and both queries collapse toward the same global average representation.
\end{enumerate}
\end{proposition}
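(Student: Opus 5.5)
The plan is to treat the two regimes separately. In each case I will trace the soft weights $\tilde w_{ij}$ through the cluster-enhanced embedding (Eq.~\ref{eq:conv}), and then through the two queries (Eqs.~\ref{eq:stability-query} and~\ref{eq:plasticity-query}). The common starting point is that $\ell_2$-normalization gives $M_{ii}=1$. The subsampled view $\tilde{\mathbf h}'_i$ is renormalized, so the same holds for it.

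\emph{Regime (i).} Because $M_{ij}\le M_{\max}^{\mathrm{off}}<\tau$ for $i\neq j$ and $M_{ii}=1\ge\tau$, the hard cluster definition (Eq.~\ref{eq:cluster-def}) gives $C_i=\{i\}$ exactly, so $|C_i|=1$ for every $i$. The soft weights then satisfy
\[
\sigma_s\!\bigl((M_{ij}-\tau)/\gamma\bigr)<\tfrac12<\sigma_s\!\bigl((1-\tau)/\gamma\bigr)
\]
for $i\neq j$. Under the stated hypothesis that the off-diagonal weights are negligible relative to $\tilde w_{ii}$, the normalized weights $\tilde w_{ij}/\sum_{j'}\tilde w_{ij'}$ concentrate on $j=i$. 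Continuity of the normalized average and of ReLU then gives $\hat{\mathbf h}_i\approx \mathrm{ReLU}(\tilde{\mathbf h}_i)=\mathrm{ReLU}(\mathbf h_i+\mathbf p'_i)$. Substituting into Eq.~\ref{eq:stability-query} with $\alpha_i=\ln(3/2)$ for all $i$ yields
\[
\mathbf q^{st}\approx \ln(3/2)\sum_{i<L}\hat{\mathbf h}_i,
\]
which is a uniformly weighted historical sum, i.e.\ an average up to the constant factor $(L-1)\ln(3/2)$. At the same time, $\mathbf q^{pl}=\hat{\mathbf h}_L$ depends only on item $L$. I would state the approximation quantitatively: the deviation of $\hat{\mathbf h}_i$ from $\mathrm{ReLU}(\tilde{\mathbf h}_i)$ is bounded, using that ReLU is 1-Lipschitz, by
\[
2\,\frac{\sum_{j\ne i}\tilde w_{ij}}{\sum_{j}\tilde w_{ij}}\;\max_j\|\tilde{\mathbf h}_j\|,
\]
which vanishes under the negligibility assumption.

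\emph{Regime (ii).} By the limit computation in Proposition~\ref{prop:regularizer_dense}, $\tilde w_{ij}\to1$ for all $i,j$ as $\tau\to-\infty$. Also, $M_{ij}\ge-1>\tau$ eventually, so $C_i=\{1,\dots,L\}$ and $|C_i|=L$. Hence
\[
\hat{\mathbf h}_i\to \mathrm{ReLU}\!\Bigl(\tfrac1L\sum_j\tilde{\mathbf h}_j\Bigr)=:\bar{\mathbf h}
\]
for every $i$, independent of $i$. This gives $\mathbf q^{pl}\to\bar{\mathbf h}$ and $\mathbf q^{st}\to (L-1)\ln(1+L/2)\,\bar{\mathbf h}$. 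Both queries are therefore collinear with the same global average. Their directions coincide, and they differ only by a positive scalar.

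The main obstacle is interpretive rather than technical. One point is that the positional embedding $\mathbf p'_j$ is indexed ``in the cluster''. I will read it as indexed by session position, so that the global average is well defined. The other point is that ``collapse toward the same representation'' must be understood up to the scalar factor $(L-1)\ln(1+L/2)$. I will make explicit that the attention softmax in Eq.~\ref{eq:attention} is not scale-invariant. The claim is therefore stated as alignment of the query directions, not as equality of the attention outputs $\mathbf z^{pl}$ and $\mathbf z^{st}$. Those outputs also use distinct heads $\mathrm{Att}^{pl}$ and $\mathrm{Att}^{st}$ in any case.
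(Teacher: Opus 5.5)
Your proposal is correct and follows the same route as the paper's proof. In regime (i), hard singleton clusters give $\alpha_i=\ln(3/2)$ and a self-dominated normalized sum gives $\hat{\mathbf h}_i\approx\mathrm{ReLU}(\tilde{\mathbf h}_i)$; in regime (ii), $\tilde w_{ij}\to 1$ forces every $\hat{\mathbf h}_i$ to the common vector $\mathrm{ReLU}\bigl(\frac1L\sum_j\tilde{\mathbf h}_j\bigr)$. Your explicit Lipschitz bound on the regime-(i) error and your remark that $\mathbf q^{st}$ and $\mathbf q^{pl}$ agree only up to the positive factor $(L-1)\ln(1+L/2)$ make precise what the paper states informally, since it writes the regime-(ii) limit as an equality and speaks loosely of a ``shared global summary''.
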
}

{\begin{remark}
Proposition~\ref{prop:canonical_interpolation} shows how the learned threshold $\tau$ shapes the plasticity and stability requests.  In sparse singleton-like regimes, the plasticity query is anchored to the last
item, while the stability query aggregates historical items. On the denser regimes, both queries are constructed from a shared
global summary. The learned threshold, therefore, controls how the queries summarize the session. 
\end{remark}}

\begin{proof}
\label{proof:pro_2_proof}
For the proposition 2, we first consider the sparse singleton regime, where each item forms an isolated cluster. Since the raw item embeddings are $\ell_2$-normalized, we have:
\[
M_{ii}=1
\qquad\text{for all } i.
\tag{i}
\label{eq:canon_i}
\]
Let
$
M_{\max}^{\mathrm{off}}=\max_{i\neq j} M_{ij}<1.
$
If
$
M_{\max}^{\mathrm{off}}<\tau<1,
$
then the hard cluster definition implies that each item forms a singleton cluster, since the diagonal entry exceeds $\tau$ while every off-diagonal entry falls below $\tau$.

Under the same rule, the diagonal soft weight dominates the off-diagonal soft weights. Therefore, the normalized weighted sum in Eq.~\ref{eq:conv} is governed primarily by the self-term, and the cluster-enhanced embedding behaves as
\[
\hat h_i \approx \mathrm{ReLU}(\tilde h_i).
\tag{v}
\label{eq:canon_v}
\]
Because each singleton cluster has size $|C_i|=1$, the stability weight in Eq.~\ref{eq:stability-query} becomes the same constant for all historical items:
\[
\alpha_i=\ln\!\left(1+\frac{|C_i|}{2}\right)=\ln\!\left(\frac{3}{2}\right).
\tag{vi}
\label{eq:canon_vi}
\]
Hence the stability query reduces to a weighted historical average, while the plasticity query remains anchored to the last item. This proves part (i).

We next consider the dense regime $\tau\to-\infty$. In this case,
\[
\tilde w_{ij}\to 1
\qquad\text{for all } i,j.
\tag{vii}
\label{eq:canon_vii}
\]
Substituting this into Eq.~\ref{eq:conv}, the normalized weighted sum becomes the same global average for every item:
\[
\hat h_i
=
\mathrm{ReLU}\!\left(
\frac{1}{L}\sum_{j=1}^L \tilde h_j
\right)
\qquad\text{for all } i.
\tag{viii}
\label{eq:canon_viii}
\]
Therefore, both the stability and plasticity queries are built from the same shared global summary, and the decoder behaves like a pooling-based decoder. This proves part (ii).
\end{proof}

\subsection{Learning Procedure}
\label{subsec:learning}

During each forward pass, item embeddings are first normalized and subsampled
to compute the similarity matrix $\mathbf{M}$, from which dynamic clusters and
their soft weights $\tilde{w}_{ij}$ are formed under the current threshold
$\tau$.  The cluster-enhanced embeddings $\hat{\mathbf{h}}_i$ are then
produced through position-aware attention (Eq.~\ref{eq:conv}).  These
embeddings serve as the basis for the stability and plasticity queries, which
attend over the session relational context $\mathbf{h}^{\mathrm{enc}}_i$ and
are fused into the session representation $\mathbf{h}_S$.  The joint objective
$\mathcal{L}(S)$ (Eqs.~\ref{eq:logloss}--\ref{eq:final-loss}) is then
minimized.

Gradients from the prediction loss update all decoder weights.  Gradients
through $\tilde{w}_{ij}$ with respect to $\tau$ adjust the clustering
threshold; the regularisation term prevents $\tau$ from collapsing to overly
dense assignments.  At convergence, the framework yields a decoder that
adaptively organises interests into clusters while maintaining a
stability--plasticity balance.  During inference, MIDA requires only a single
forward pass (clustering, aggregation, prediction are jointly computed), requiring only one forward pass and introducing moderate inference overhead relative to the naive decoder.


The complete training procedure is summarised in
Algorithm~\ref{alg:MIDA}.

\begin{algorithm}[t]
\caption{MIDA Training Algorithm}
\label{alg:MIDA}
\begin{algorithmic}[1]
\STATE \textbf{Input:} training tuples $\{(S, v_{L+1})\}$, epochs $E$,
       learning rate $\alpha$
\STATE \textbf{Output:} frozen parameters $\theta^{\star}, \tau^{\star}$
\STATE Initialise model parameters $\theta$, cluster threshold $\tau$
\FOR{epoch $=1$ \TO $E$}
  \FORALL{minibatch $\mathcal{B}$}
    \STATE $\mathcal{L} \leftarrow 0$
    \FORALL{$(S, v_{L+1}) \in \mathcal{B}$}
      \STATE Encode items $\{v_i\}$ into embeddings $\mathbf{h}_i$,
             $\mathbf{h}^{\mathrm{enc}}_i$
      \STATE Compute pairwise similarities $\mathbf{M}_{ij}$ \COMMENT{Eq.~\ref{eq:similarity}}
      \STATE Form soft cluster weights $\tilde{w}_{ij}$ \COMMENT{Eq.~\ref{eq:soft-cluster}}
      \STATE Apply attention over $C_i$ to obtain $\hat{\mathbf{h}}_i$ \COMMENT{Eq.~\ref{eq:conv}}
      \STATE Derive $\mathbf{q}^{pl}$ and $\mathbf{q}^{st}$ \COMMENT{Eqs.~\ref{eq:plasticity-query}--\ref{eq:stability-query}}
      \STATE Compute $\mathbf{z}^{pl}, \mathbf{z}^{st}$ \COMMENT{Eqs.~\ref{eq:aggregation-pl}--\ref{eq:attention}}
      \STATE Fuse to obtain $\mathbf{h}_S$ \COMMENT{Eq.~\ref{eq:fusion}}
      \STATE $\mathcal{L} \mathrel{+}= \mathcal{L}^{\mathrm{pred}}(S) + \eta\,\mathcal{L}^{\mathrm{reg}}(S)$ \COMMENT{Eq.~\ref{eq:final-loss}}
    \ENDFOR
    \STATE Update parameters $\theta, \tau$ via gradient descent
  \ENDFOR
\ENDFOR
\STATE $(\theta^{\star}, \tau^{\star}) \leftarrow (\theta, \tau)$
\STATE \textbf{Inference:} Given $S$, run encoding, clustering, attention,
and aggregation with frozen $(\theta^{\star}, \tau^{\star})$; recommend
$\hat{v}_{L+1} = \arg\max_{v\in\mathcal{V}} p(v \mid S)$.
\end{algorithmic}
\end{algorithm}

\subsection{Complexity Analysis}
\label{subsec:complexity}

The MIDA decoder consists of two main components.  The clustering module
computes pairwise similarities via matrix multiplication:
{with embedding-dimension subsampling stride $k=4$, this reduces to
$\mathcal{O}(L^2 \cdot d/k)$.}  Attention
over the selected neighbors costs $\mathcal{O}(L \cdot e \cdot d)$, where
$e$ is the average cluster size, and the feed-forward network adds
$\mathcal{O}(L \cdot d^2)$.  For preference aggregation, the position-aware
attention step costs $\mathcal{O}(L \cdot d^2)$ and the gating module costs
$\mathcal{O}(d)$.  Score computation against all items costs
$\mathcal{O}(d \cdot |\mathcal{V}|)$.  {The overall per-session complexity is
$\mathcal{O}(L^2d/k + Ld^2 + d|\mathcal{V}|)$,} where the score computation
term dominates since $L$ is typically small and $d \ll |\mathcal{V}|$.
Table~\ref{tab:complexity_comparison} compares MIDA with the AttenMixer
decoder.

\begin{table}[t]
  \centering
  \caption{Complexity Comparison: AttenMixer vs.\ MIDA. $H$ is the number
    of attention heads; $\mathit{lay}$ is the number of reasoning levels in
    AttenMixer.}
  \label{tab:complexity_comparison}
  \renewcommand{\arraystretch}{1.3}
  \begin{tabular}{lcc}
    \toprule
    \textbf{Module} & \textbf{\# Parameters} & \textbf{FLOPs} \\
    \midrule
    AttenMixer & $\mathcal{O}((H+\mathit{lay})d^2)$ &
                 $\mathcal{O}(H\cdot\mathit{lay}^2\cdot L\cdot d)$ \\
    MIDA       & $\mathcal{O}(d^2)$ &
                 {$\mathcal{O}(L^2d/k + L\cdot d^2)$} \\
    \bottomrule
  \end{tabular}
\end{table}

\section{Experiments}
\label{sec:experiments}

We conduct experiments to evaluate the effectiveness and efficiency of MIDA.
Specifically, we aim to answer the following research questions.
\begin{itemize}[leftmargin=*]
  \item \textbf{RQ1:} How does MIDA perform compared to state-of-the-art
        baselines in terms of recommendation accuracy and computational
        efficiency?
  \item \textbf{RQ2:} How does the model perform on sessions of varying
        lengths, from short to long?
  \item \textbf{RQ3:} What is the contribution of each component?
  \item \textbf{RQ4:} How sensitive is the model to different parameter
        settings?
  \item \textbf{RQ5:} \revise{Does MIDA improve target-oriented representation
        alignment and form coherent interest clusters?}
  \item \textbf{RQ6:} Under what session conditions (in terms of semantic
        diversity) does MIDA provide the largest benefit?
\end{itemize}

\subsection{Datasets and Baseline Models}

\paragraph{Datasets.}
We evaluate on four real-world datasets spanning diverse domains.
\emph{Diginetica}\footnote{\url{https://competitions.codalab.org/competitions/11161}},
from CIKM Cup 2016, consists of transactional data reflecting user purchasing
preferences.
\emph{Yoochoose}\footnote{\url{http://2015.recsyschallenge.com/challege.html}},
released for RecSys Challenge 2015, records high-volume click-stream sequences.
\emph{Retailrocket}\footnote{\url{https://www.kaggle.com/retailrocket/ecommerce-dataset}}
captures implicit browsing behaviors from a personalized e-commerce system.
\emph{Nowplaying}\footnote{\url{http://dbis-nowplaying.uibk.ac.at}} comprises
music listening events derived from social media hashtags.  Following prior
works~\cite{yu2023causality,zhang2025unified}, we filter out sessions shorter
than 3 interactions and items with fewer than 5 occurrences.  Datasets are
split chronologically into training, validation, and testing sets with an
8:1:1 ratio.  For Yoochoose, we use the most recent $1/64$ fraction of
sequences due to its massive scale~\cite{wu2019session}.  Statistics of the
preprocessed datasets are summarised in Table~\ref{tab:dataset_stats}.

\begin{table}[t]
  \caption{Statistics of datasets used in the experiments.}
  \label{tab:dataset_stats}
  \centering
  \begin{tabular}{lcccc}
    \toprule[1pt]
    \textbf{Dataset} & \textbf{\#Interacts} & \textbf{\#Sessions} &
    \textbf{\#Items} & \textbf{Avg.\ Len} \\
    \midrule
    Nowplaying   & 11,375,543 & 1,064,170 & 59,397 & 11.69 \\
    Diginetica   &    492,944 &   145,388 & 40,588 &  4.39 \\
    Yoochoose    &  2,184,280 &   273,645 & 15,193 &  8.98 \\
    Retailrocket &  2,741,674 &   460,184 & 46,406 &  6.95 \\
    \bottomrule[1pt]
  \end{tabular}
\end{table}

\paragraph{Baseline Models.}
We evaluate MIDA by replacing the decoders of representative backbones across
three major architectures: (1)~RNN-based: GRU4Rec~\cite{hidasi2015session};
(2)~graph-based: SR-GNN~\cite{wu2019session} and
GCE-GNN~\cite{wang2020global}; (3)~Transformer-based:
MiaSRec~\cite{choi2024multi}.  For each backbone, we compare MIDA against
five decoder alternatives: the original backbone (Naive), a Reinforcement
Learning-based approach (RL)~\cite{zeng2024fair}, Multi-scale Positional
Encoding (Ms-PoE)~\cite{zhang2024found}, AttenMixer
decoder~\cite{zhang2023efficiently}, and a mask-based invariant-learning  method IDEA~\cite{liao2025mitigating}.

\begin{itemize}[leftmargin=*]
  \item \textbf{GRU4Rec}~\cite{hidasi2015session} models sequential patterns
        via a gated recurrent unit and uses the last position output as the
        session representation.
  \item \textbf{SR-GNN}~\cite{wu2019session} constructs a session graph and
        applies GNNs to aggregate neighbor information; the decoder uses the
        last item as the current intent via attention.
  \item \textbf{GCE-GNN}~\cite{wang2020global} extends SR-GNN with
        contrastive learning and dual augmented views; the decoder employs
        mean pooling of all items followed by an attention mechanism.
  \item \textbf{MiaSRec}~\cite{choi2024multi} is a Transformer-based model
        that generates multiple representations and employs an intent
        selection layer; the decoder sums mean and max pooling.
  \item \textbf{RL}~\cite{zeng2024fair} employs a Deep Q-Network agent to
        optimize graph structure by adaptively adding auxiliary interaction
        edges.  We adapt this method as a decoder baseline by retaining its
        graph-augmentation module (reward signal: next-item Recall@5) while
        keeping the backbone encoder fixed, making comparison with other
        decoder-only baselines direct.
  \item \textbf{Ms-PoE}~\cite{zhang2024found} applies multi-scale rescaling
        to positional indices to mitigate the ``lost-in-the-middle'' problem
        in long sequences.
  \item \textbf{AttenMixer}~\cite{zhang2023efficiently} achieves multi-level
        reasoning via attention maps based on the last few items, combining
        deep set operations with L$_p$ pooling.

  \item \textbf{IDEA}~\cite{liao2025mitigating} 
        uses an adversarial environment generator produces soft masks over session positions to
        synthesize multiple distribution-shifted environments, resulting in stable
        representations under distribution shift. 
        
\end{itemize}

\paragraph{Evaluation Metrics.}
We employ Recall@$K$ (R@$K$) and MRR@$K$ (M@$K$) as evaluation metrics.
Recall@$K$ measures whether the true item appears in the top-$K$
recommendations; MRR@$K$ evaluates ranking quality.  We also report the
average training time per epoch (seconds) and the relative improvement of
MIDA over the Naive baseline. \revise{For inference efficiency, we further
measure full-test-set latency after warm-up, including encoder computation,
decoder computation, and full-vocabulary item scoring, while excluding CPU-side
ranking-metric computation.}

\paragraph{Evaluation Scope and Rationale.}
Our primary evaluation framework holds each encoder fixed and varies only the
decoder, providing a \emph{controlled assessment} of decoder-side impact.
This design choice is deliberate: it disentangles the decoder's contribution
from encoder quality, enabling fair comparison across heterogeneous backbone
families without confounding effects from encoder training differences.
In addition to decoder-only comparison, we also position MIDA against
representative end-to-end SBR methods in the Related Work discussion to
contextualise overall performance.  We note that direct comparison with
end-to-end systems trained without the same backbone is methodologically
challenging and deferred to future work, as those systems jointly optimize
encoder and decoder under different inductive biases.  The controlled
decoder-substitution protocol used here is standard in the SBR literature
for evaluating decoder improvements~\cite{zhang2023efficiently,zhang2024found}.

\subsection{Implementation Details}

All experiments are conducted on an NVIDIA RTX A5000 GPU (24\,GB VRAM).
To ensure fair comparison, we adopt consistent hyperparameters across all
methods: hidden embedding dimension $d=100$, batch size of 100, and
L2-normalized parameter initialisation.  Item embeddings are initialised
from a zero-mean Gaussian with standard deviation 0.1.  When integrating
MIDA or AttenMixer into a backbone, we preserve the encoder configuration
and only replace the decoder component, keeping all other settings identical.

\paragraph{Hyperparameters.}
Hyperparameter $\eta$ is tuned via grid search in
$\{0.05, 0.01, 0.005, 0.001\}$; the embedding-dimension subsampling stride $k$ is fixed at 4.
The learnable threshold is initialised as $\tau_0=0.4$; $\tau$ is then
optimized via backpropagation jointly with all other model parameters.  The
temperature $\gamma=0.1$ and floor constant $\epsilon=1\times10^{-6}$ are
fixed throughout.  All models use the AdamW optimizer with learning rate
0.001, weight decay $10^{-5}$, and early stopping with patience of 5 epochs
evaluated on Recall@5 over the validation set.  Training is limited to 30
epochs maximum.

\paragraph{Candidate scoring.}
Following the standard SBR evaluation protocol~\cite{wu2019session}, the
prediction score is computed over the \emph{full item vocabulary}
$\mathcal{V}$ without negative sampling; Recall@$K$ and MRR@$K$ are computed
against all items.

\paragraph{Backbone integration.}
For each backbone, MIDA replaces only the decoder component: specifically, the
final-item attention and pooling modules in SR-GNN/GCE-GNN, the GRU output
projection in GRU4Rec, and the intent-selection aggregation layer in MiaSRec.
All encoder parameters and encoder training schedules remain unchanged.

\paragraph{Statistical reporting.}
For each model variant, we report the average performance over 5 independent
runs using fixed random seeds $\{42, 123, 456, 789, 1024\}$.  Standard
deviations are provided alongside mean results in supplemental tables, and
statistical significance ($p<0.05$, paired $t$-test) over the best baseline
is indicated by $^\dagger$ in the main result table.  Early stopping monitors
Recall@5 on the validation set with patience 5; we verify that results are
consistent when monitoring MRR@5 instead (differences $<0.1\%$), confirming
that the metric choice does not meaningfully bias the reported performance.
The total parameter count for the MIDA decoder module is $\mathcal{O}(d^2)$
(three weight matrices for key/value projections plus the gate): approximately
30\,K parameters for $d=100$, versus approximately 60--120\,K for AttenMixer
with $H=4$ heads and $\mathrm{lay}=2$ levels.

\subsection{RQ1: Accuracy and Efficiency Comparison}

\begin{table*}[t]
  \setlength{\tabcolsep}{3pt}
  \caption{Performance comparison across four datasets. R and M denote
    Recall@5 and MRR@5, respectively. Time\,(s) is the average training time
    per epoch.  \textbf{Bold} numbers denote the best performance;
    \revise{$^\dagger$ indicates statistically significant improvements over the best baseline across five runs based on a paired t-test (p$\leq$ 0.05).}  ``Improv.'' shows the relative improvement of MIDA
    over the Naive baseline.}
  \label{tab:main_results}
  \centering
  \begin{tabular}{ll|ccc|ccc|ccc|ccc}
    \toprule[1pt]
    \multirow{2}{*}[-0.5ex]{\textbf{Backbone}}
    & \multirow{2}{*}[-0.5ex]{\textbf{Model}}
    & \multicolumn{3}{c|}{\textbf{Nowplaying}}
    & \multicolumn{3}{c|}{\textbf{Diginetica}}
    & \multicolumn{3}{c|}{\textbf{Yoochoose}}
    & \multicolumn{3}{c}{\textbf{RetailRocket}} \\
    \cmidrule(lr){3-5}\cmidrule(lr){6-8}\cmidrule(lr){9-11}\cmidrule(lr){12-14}
    & & R@5 & M@5 & Time
    & R@5 & M@5 & Time
    & R@5 & M@5 & Time
    & R@5 & M@5 & Time \\
    \midrule
    \multirow{6}{*}{GRU}
    & Naive      & 3.51 & 1.83 & 96  & 22.91 & 14.69 & 57  & 39.43 & 22.96 & 31  & 39.54 & 30.73 & 44  \\
    & RL         & 3.48 & 1.82 & 537 & 22.83 & 14.72 & 197 & 39.37 & 22.93 & 147 & 39.49 & 30.68 & 169 \\
    & Ms-PoE     & 4.13 & 2.09 & 315 & 23.63 & 15.27 & 159 & 40.83 & 23.41 & 113 & 40.67 & 31.12 & 134 \\
    & AttenMixer & 4.32 & 2.15 & 583 & 24.29 & 14.26 & 423 & 41.57 & 22.67 & 268 & 40.59 & 30.99 & 312 \\

& {IDEA}
& {4.20} & {1.90} & {417}
& {23.79} & {15.15} & {250}
& {41.42} & {23.23} & {209}
& {40.73} & {31.15} & {184} \\

    & MIDA       & \textbf{5.23}$^\dagger$ & \textbf{3.12}$^\dagger$ & 381
                 & \textbf{24.97}$^\dagger$ & \textbf{15.86}$^\dagger$ & 271
                 & \textbf{42.07}$^\dagger$ & \textbf{23.92}$^\dagger$ & 183
                 & \textbf{41.98}$^\dagger$ & \textbf{31.54}$^\dagger$ & 217 \\
    \cmidrule{2-14}
    & \textit{Improv.}
    & \textit{+49.0\%} & \textit{+70.5\%} & --
    & \textit{+8.99\%} & \textit{+7.96\%} & --
    & \textit{+6.70\%} & \textit{+4.18\%} & --
    & \textit{+6.17\%} & \textit{+2.64\%} & -- \\
    \midrule
    \multirow{6}{*}{SR-GNN}
    & Naive      & 7.23 & 4.13 & 267 & 25.21 & 14.82 & 227 & 41.23 & 23.53 & 163 & 41.53 & 29.83 & 183 \\
    & RL         & 7.15 & 4.12 & 613 & 25.17 & 14.87 & 387 & 41.15 & 23.47 & 297 & 41.43 & 29.87 & 322 \\
    & Ms-PoE     & 7.87 & 4.63 & 450 & 26.12 & 15.42 & 312 & 42.53 & 24.23 & 207 & 43.23 & 31.10 & 267 \\
    & AttenMixer & 8.98 & 5.53 & 893 & 28.08 & 16.72 & 617 & 44.79 & 26.13 & 367 & 45.49 & 32.49 & 472 \\

    & {IDEA} & {6.73} & {4.20} & {997}
                 & {26.67} & {16.01} & {190}
                 & {44.79} & {25.25} & {173}
                 & {44.58} & {32.34} & {128} \\
    
    & MIDA       & \textbf{9.23}$^\dagger$ & \textbf{5.69}$^\dagger$ & 727
                 & \textbf{28.27} & \textbf{16.94}$^\dagger$ & 413
                 & \textbf{45.53}$^\dagger$ & \textbf{26.41}$^\dagger$ & 241
                 & \textbf{47.29}$^\dagger$ & \textbf{34.67}$^\dagger$ & 397 \\
    \cmidrule{2-14}
    & \textit{Improv.}
    & \textit{+27.7\%} & \textit{+37.8\%} & --
    & \textit{+12.1\%} & \textit{+14.3\%} & --
    & \textit{+10.4\%} & \textit{+12.2\%} & --
    & \textit{+13.9\%} & \textit{+16.2\%} & -- \\
    \midrule
    \multirow{6}{*}{GCE-GNN}
    & Naive      & 9.21 & 5.38 & 342 & 30.49 & 17.87 & 309 & 45.41 & 25.52 & 204 & 46.93 & 33.39 & 251 \\
    & RL         & 9.14 & 5.32 & 628 & 30.47 & 17.92 & 457 & 45.39 & 25.23 & 357 & 46.87 & 33.37 & 413 \\
    & Ms-PoE     & 9.79 & 5.73 & 389 & 30.92 & 18.22 & 390 & 45.77 & 25.72 & 247 & 47.63 & 34.23 & 287 \\
    & AttenMixer & 8.56 & 5.16 & 1321 & 27.54 & 15.96 & 819 & 38.49 & 21.53 & 376 & 45.09 & 32.67 & 739 \\

    & {IDEA} & {\textbf{9.65}} & {\textbf{6.21}} & {1008}
                 & {30.53} & {17.94} & {429}
                 & {45.49} & {25.93} & {321}
                 & {47.11} & {33.68} & {217} \\
    
    & MIDA       & \textbf{10.27}$^\dagger$ & {6.12} & 647
                 & \textbf{31.47}$^\dagger$ & \textbf{18.77}$^\dagger$ & 438
                 & \textbf{46.21}$^\dagger$ & \textbf{26.31}$^\dagger$ & 289
                 & \textbf{48.93}$^\dagger$ & \textbf{36.02}$^\dagger$ & 317 \\
    \cmidrule{2-14}
    & \textit{Improv.}
    & \textit{+11.5\%} & \textit{+13.8\%} & --
    & \textit{+3.21\%} & \textit{+5.04\%} & --
    & \textit{+1.76\%} & \textit{+3.10\%} & --
    & \textit{+4.26\%} & \textit{+7.88\%} & -- \\
    \midrule
    \multirow{6}{*}{MiaSRec}
    & Naive      & 9.67 & 5.51 & 223 & 31.93 & 18.03 & 167 & 44.66 & 25.13 & 119 & 47.57 & 34.41 & 139 \\
    & RL         & 9.55 & 5.22 & 613 & 31.87 & 18.02 & 293 & 44.56 & 25.17 & 241 & 47.49 & 34.37 & 287 \\
    & Ms-PoE     & 9.88 & 5.93 & 389 & 32.12 & 18.17 & 217 & 44.91 & 25.27 & 167 & 47.83 & 35.03 & 197 \\
    & AttenMixer & 10.14 & 6.14 & 661 & 32.77 & 17.98 & 523 & 45.34 & 25.19 & 385 & 47.33 & 33.11 & 447 \\

& {IDEA}
& {10.06} & {6.25} & {592}
& {32.13} & {18.03} & {440}
& {44.60} & {25.40} & {311}
& {47.61} & {34.46} & {235} \\
    
    & MIDA       & \textbf{10.85}$^\dagger$ & \textbf{6.41} & 487
                 & \textbf{34.32}$^\dagger$ & \textbf{19.28}$^\dagger$ & 418
                 & \textbf{46.01}$^\dagger$ & \textbf{25.85}$^\dagger$ & 212
                 & \textbf{49.12}$^\dagger$ & \textbf{35.29}$^\dagger$ & 255 \\
    \cmidrule{2-14}
    & \textit{Improv.}
    & \textit{+12.20\%} & \textit{+16.33\%} & --
    & \textit{+7.48\%} & \textit{+6.93\%} & --
    & \textit{+3.02\%} & \textit{+2.86\%} & --
    & \textit{+3.26\%} & \textit{+2.56\%} & -- \\
    \bottomrule[1pt]
  \end{tabular}
\end{table*}

Table~\ref{tab:main_results} presents results across four datasets using
Recall@5 and MRR@5.  All results are averages over 5 independent runs; the
standard deviation across runs is $\leq0.15$ for all MIDA entries (e.g.\
GCE-GNN+MIDA on Diginetica: $31.47\pm0.11$ R@5, $18.77\pm0.09$ M@5),
indicating stable training behavior.  We make the following observations.

\begin{itemize}[leftmargin=*]
  \item MIDA achieves consistent improvements across all backbones by
        replacing their original decoders, suggesting that the design
        transfers well across heterogeneous encoder families.
  \item MIDA outperforms AttenMixer and other decoder baselines across almost
        settings, with statistically significant gains providing evidence for
        the effectiveness of the design.
  \item AttenMixer exhibits inconsistent performance across encoders.  It
        yields modest improvements with simpler encoders (GRU, SR-GNN) but
        degrades performance when paired with GCE-GNN on all four datasets.
        Specifically, GCE-GNN already computes multi-hop graph-propagated
        representations; AttenMixer's additional multi-level attention
        propagation layers operate over these same features, introducing
        redundancy and gradient interference.  Our decoupled design (query
        construction from raw item embeddings and key-value pairs from encoder
        outputs) avoids this by maintaining a clear separation between
        representation encoding and intent decoding.

  \item IDEA performs better than the other baselines on some datasets, showing the benefit of modelling distribution shifts. Its effectiveness relies on the assumption of interest transitions across environments, which may not usually hold in session-based recommendation.

  \item RL exhibits only marginal gains or even degradation (e.g.\ GRU on
        Nowplaying) despite significantly higher training costs, suggesting
        that auxiliary graph structures may introduce noise or optimisation
        instability.  Ms-PoE consistently outperforms Naive backbones but
        remains below MIDA, indicating that MIDA's dynamic decoding mechanism
        captures user intent more effectively than solely rescaling positional
        weights.
  \item In terms of computational overhead, MIDA requires substantially less
        training time than AttenMixer (see also Table~\ref{tab:complexity_comparison}).
        This efficiency derives from two design choices: (i)~similarity
        computed via subsampling rather than full embeddings, and (ii)~two
        specialised queries instead of per-item queries.
  \item \textbf{Decoder isolation interpretation.}  Because every other
        factor (encoder architecture, embedding dimension, optimizer, and
        training schedule) is held constant across all backbones in
        Table~\ref{tab:main_results}, the observed accuracy gains are
        \emph{attributable solely to the decoder substitution}.  This makes the
        controlled results direct empirical evidence that the decoder, not the
        encoder, is the performance-limiting component in these settings, and
        that redesigning the decoder alone constitutes a sufficient
        intervention.
\end{itemize}

\begin{table}[t]
  \centering

  \caption{Inference latency comparison, measured as milliseconds per test session and averaged over five full-test-set passes. Atten. denotes AttenMixer.
  \label{tab:inference_latency}}
  \footnotesize
  \setlength{\tabcolsep}{3.0pt}
  \begin{tabular}{lccccc}
    \toprule
    Dataset & Naive & Atten. & MIDA & MIDA$-$Naive & Atten./MIDA \\
    \midrule
    Diginetica & 0.151 & 0.509 & 0.192 & +0.041 & 2.65$\times$ \\
    Yoochoose & 0.051 & 0.159 & 0.102 & +0.051 & 1.56$\times$ \\
    RetailRocket & 0.036 & 0.130 & 0.096 & +0.060 & 1.35$\times$ \\
    \bottomrule
  \end{tabular}
\end{table}

Table~\ref{tab:inference_latency} complements the training-time comparison by
measuring inference latency.  MIDA introduces a small absolute overhead over the Naive GCE-GNN decoder because it constructs dynamic multi-interest clusters, but the additional cost remains below 0.061\,ms/session on all measured datasets.  Compared with AttenMixer, MIDA is consistently faster: the AttenMixer/MIDA latency ratio is $2.65\times$ on Diginetica, $1.56\times$ on Yoochoose, and $1.35\times$ on RetailRocket. These results show that the pairwise clustering step adds measurable but moderate inference overhead, while MIDA remains more efficient than a heavier multi-level attention decoder AttenMixer.

{We further add an additional comparison with recent end-to-end SBR models\cite{ou2025ls, lv2025dynamic} in the Table \ref{tab:compare_1}. The results demonstrate MIDA's competitiveness compared with recent end-to-end methods.}

\begin{table}[]
\centering
\small
\caption{{Performance comparison of MIDA with recent SBR baselines. Results for LINK on some datasets are unavailable due to excessive memory requirements.}}
\label{tab:compare_1}
\begin{tabular}{lcccccc}
\toprule
& \multicolumn{2}{c}{\textbf{Nowplaying}}
& \multicolumn{2}{c}{\textbf{Diginetica}}
& \multicolumn{2}{c}{\textbf{Yoochoose}} \\
\cmidrule(lr){2-3}
\cmidrule(lr){4-5}
\cmidrule(lr){6-7}

\textbf{Method}
& R@5 & M@5
& R@5 & M@5
& R@5 & M@5 \\
\midrule

DMI-GNN
& \underline{10.23}
& 6.06
& 30.86
& 18.28
& 45.18
& 25.23 \\

LINK
& --
& --
& \textbf{31.80}
& \underline{18.73}
& 44.42
& 24.86 \\

\midrule

MIDA (ours)
& \textbf{10.27}
& \underline{6.12}
& \underline{31.47}
& \textbf{18.77}
& \textbf{46.21}
& \textbf{26.31} \\

\bottomrule
\end{tabular}
\end{table}

\subsection{RQ2: Performance across Varying Session Lengths}

\begin{table}[t]
  \caption{Percentage improvement of GCE-GNN+MIDA over GCE-GNN across
    session length groups. \textbf{Bold} = highest gain per row.
    Datasets are divided into four length subsets: 1--5, 6--10, 11--15, 16+.}
  \label{tab:session_length_results}
  \centering
  \begin{tabular}{ll|cccc}
    \toprule[1pt]
    \textbf{Dataset} & \textbf{Metric} & \textbf{1--5} & \textbf{6--10} &
    \textbf{11--15} & \textbf{16+} \\
    \midrule
    \multirow{2}{*}{Nowplaying}
    & R@5 & +6.34\% & +12.18\% & +13.92\% & {+15.06\%} \\
    & M@5 & +5.87\% & +11.23\% & +12.75\% & {+13.41\%} \\
    \cmidrule(lr){1-6}
    \multirow{2}{*}{Diginetica}
    & R@5 & +2.76\% & +3.41\% & +1.55\% & {+7.89\%} \\
    & M@5 & +4.17\% & +2.04\% & +4.64\% & {+4.99\%} \\
    \cmidrule(lr){1-6}
    \multirow{2}{*}{Yoochoose}
    & R@5 & +0.85\% & +1.58\% & +2.13\% & {+3.43\%} \\
    & M@5 & +1.04\% & +1.28\% & +1.96\% & {+2.16\%} \\
    \cmidrule(lr){1-6}
    \multirow{2}{*}{RetailRocket}
    & R@5 & +4.27\% & +6.19\% & {+8.38\%} & +6.46\% \\
    & M@5 & +6.97\% & +12.23\% & {+15.35\%} & +15.07\% \\
    \bottomrule[1pt]
  \end{tabular}
\end{table}

\begin{table}[t]
  \centering

  \caption{\revise{Compact summary of the 16+ long-session subset under the
  representative GCE-GNN backbone. \#Len$\ge$16 refers to the number of test sessions with a length of at least 16.}}
  \label{tab:long_session_summary}
  \footnotesize
  \setlength{\tabcolsep}{3.0pt}
  \begin{tabular}{llrrr}
    \toprule
    Dataset & Model & \#Len$\ge$16 & R@5 & M@5 \\
    \midrule
    \multirow{2}{*}{Nowplaying}
      & GCE-GNN & \multirow{2}{*}{1{,}102} & 11.91 & 7.77 \\
      & MIDA & & 13.70 & 8.81 \\
    \cmidrule(lr){1-5}
    \multirow{2}{*}{Diginetica}
      & GCE-GNN & \multirow{2}{*}{287} & 26.16 & 15.07 \\
      & MIDA & & 28.22 & 15.82 \\
    \cmidrule(lr){1-5}
    \multirow{2}{*}{Yoochoose}
      & GCE-GNN & \multirow{2}{*}{3{,}198} & 39.24 & 21.59 \\
      & MIDA & & 40.59 & 22.06 \\
    \cmidrule(lr){1-5}
    \multirow{2}{*}{RetailRocket}
      & GCE-GNN & \multirow{2}{*}{856} & 30.73 & 17.36 \\
      & MIDA & & 32.71 & 19.98 \\
    \bottomrule
  \end{tabular}
\end{table}

To understand how MIDA handles varying session complexity, we select GCE-GNN
as the representative baseline and replace its decoder with MIDA.
\revise{We evaluate long-range behavior through length-based
test subsets within each benchmark, treating the 11--15 and 16+ groups as
long-session regimes rather than introducing an additional dataset column.
Table~\ref{tab:long_session_summary} further focuses on the 16+ subset by
reporting its size and raw scores.}
Table~\ref{tab:session_length_results} presents the relative performance
gains across session length groups.

\begin{itemize}[leftmargin=*]
  \item MIDA delivers consistent improvements across all session length
        ranges, with gains generally increasing as session length grows.  On
        Yoochoose, the R@5 improvement rises from 0.85\% for short sessions
        (1--5 items) to 3.43\% for longer sessions (16+ items), confirming
        that our decoder handles complex sequential patterns effectively.
  \item The magnitude of improvement varies across datasets.  RetailRocket
        shows the most substantial gains, particularly in MRR@5 (6.97\% to
        15.35\% as sessions lengthen).  Diginetica shows moderate but stable
        improvements (1.55\%--7.89\% in R@5).  The superior performance on
        longer sessions follows from two core components: the Dynamic
        Multi-Interest Clustering module groups semantically related items to
        capture diverse user intents beyond sequential distance, while the
        Adaptive Aggregation mechanism dynamically weights historical
        information according to the current context.
  \item MIDA shows increasingly strong gains in longer sessions across all
        datasets.  \revise{The long-session subsets are not newly constructed
        datasets; they are partitions of the original test sets defined by
        session length.} 
        RQ6 (Section~\ref{subsec:rq6}) provides a complementary analysis
        linking these long-session gains to within-session semantic dispersion,
        offering a unified explanation for both length- and diversity-driven
        performance patterns.
\end{itemize}

\subsection{RQ3: Ablation Study}


\begin{table}[t]
  \caption{Ablation results on MRR@5 (M@5) and Recall@5 (R@5). Bold denotes
    best performance.}
  \label{tab:ablation}
  \centering
  \setlength{\tabcolsep}{3pt}
  \begin{tabular}{l|cc|cc|cc|cc}
    \toprule[1pt]
    \multirow{2}{*}{\textbf{}}
    & \multicolumn{2}{c|}{\textbf{Nowplaying}}
    & \multicolumn{2}{c|}{\textbf{Diginetica}}
    & \multicolumn{2}{c|}{\textbf{Yoochoose}}
    & \multicolumn{2}{c}{\textbf{RetailRocket}} \\
    & \textbf{R@5} & \textbf{M@5}
    & \textbf{R@5} & \textbf{M@5}
    & \textbf{R@5} & \textbf{M@5}
    & \textbf{R@5} & \textbf{M@5} \\
    \midrule
    w/o DM
    & 9.37 & 5.56
    & 30.13 & 18.09
    & 43.57 & 24.81
    & 44.97 & 34.38 \\
    w/o $q^{st}$
    & 9.72 & 5.71
    & 30.59 & 17.88
    & 45.14 & 25.37
    & 47.92 & 34.96 \\
    w/o $q^{pl}$
    & 9.54 & 5.52
    & 30.43 & 17.87
    & 44.80 & 25.14
    & 48.19 & 35.08 \\

{w E\_out}
& {9.95} & {5.97}
& {31.25} & {18.87}
& {45.24} & {25.76}
& {48.68} & {35.76} \\

{w $\tau_{\mathrm{MLP}}$}
& {10.22} & {6.02}
& {31.39} & {18.78}
& {45.99} & {26.09}
& {48.76} & {35.92} \\

    w F
    & 8.95 & 4.94
    & 28.74 & 17.13
    & 41.83 & 24.53
    & 46.89 & 34.32 \\
    
    w S
    & 9.63 & 5.66
    & 30.27 & 18.06
    & 45.21 & 25.39
    & 47.71 & 35.13 \\

    MIDA
    & \textbf{10.27} & \textbf{6.12}
    & \textbf{31.47} & \textbf{18.77}
    & \textbf{46.21} & \textbf{26.31}
    & \textbf{48.93} & \textbf{36.02} \\
    \bottomrule[1pt]
  \end{tabular}
\end{table}

We validate the contribution of each MIDA component on the GCE-GNN encoder
(Table~\ref{tab:ablation}).

\begin{itemize}[leftmargin=*]
  \item \textbf{w/o DM (Remove Clustering)}: Replacing Dynamic Multi-Interest
        Clustering with simple average pooling degrades performance across all
        datasets.  On Diginetica, for instance, R@5 falls from 31.47 to 30.13
        and M@5 from 18.77 to 18.09, showing that grouping semantically
        similar items is essential for effective utilisation of historical
        information.
  \item \textbf{w/o $q^{st}$ (Remove Stable Query)}: Without the persistent
        intent query, long-term preference modeling weakens noticeably.
        On Yoochoose, R@5 falls from 46.21 to 45.14 and M@5 from 26.31 to
        25.37, pointing to the importance of $q^{st}$ in aggregating stable
        user interests across extended histories.
  \item \textbf{w/o $q^{pl}$ (Remove Plastic Query)}: The substantial absolute
        drops occur upon removal of the evolving intent query.  On Yoochoose,
        R@5 falls from 46.21 to 44.80 and M@5 from 26.31 to 25.14,
        indicating that capturing transient user preferences is indispensable
        for accurate next-item prediction.
        

  \item {\textbf{w $\tau_{\mathrm{MLP}}$ (Learned Per-Session
        Threshold):} This variant replaces MIDA's clustered, dataset-level
        threshold with an instance-specific $\tau$ predicted per session by a
        lightweight MLP over four similarity statistics (mean, std, max
        similarity, and sequence-length). It performs same or a bit lower than MIDA.
        This suggests that the handful of session-level statistics available
        to the MLP carry little additional signal beyond what the clustered
        threshold already captures, while the extra parameters and
        per-session variance introduce a mild risk of overfitting that
        offsets any gain from finer-grained adaptivity.}
        
  \item \textbf{w F (Fixed Strategy)}: Using a fixed selection strategy (single
        most similar item) rather than threshold-based aggregation produces a
        pronounced decline: Yoochoose R@5 falls from 46.21 to 41.83, a
        relative decrease of over 9.4\%.  This result shows that adaptive,
        multi-item aggregation substantially outperforms rigid item selection.
  \item \textbf{w S (Stacked Structure)}: Direct stacking without gradient
        detachment produces a moderate but consistent performance drop (e.g.\
        Yoochoose R@5: 46.21 to 45.21).  The pattern aligns with prior
        observations on gradient interference~\cite{feng2024long}, suggesting
        that decoupling gradient flows between the two objectives is beneficial
        for training stability.
\end{itemize}

\subsection{RQ4: Sensitivity to Hyperparameter $\tau$}

\begin{figure*}[t]
  \centering
  \includegraphics[width=0.24\linewidth]{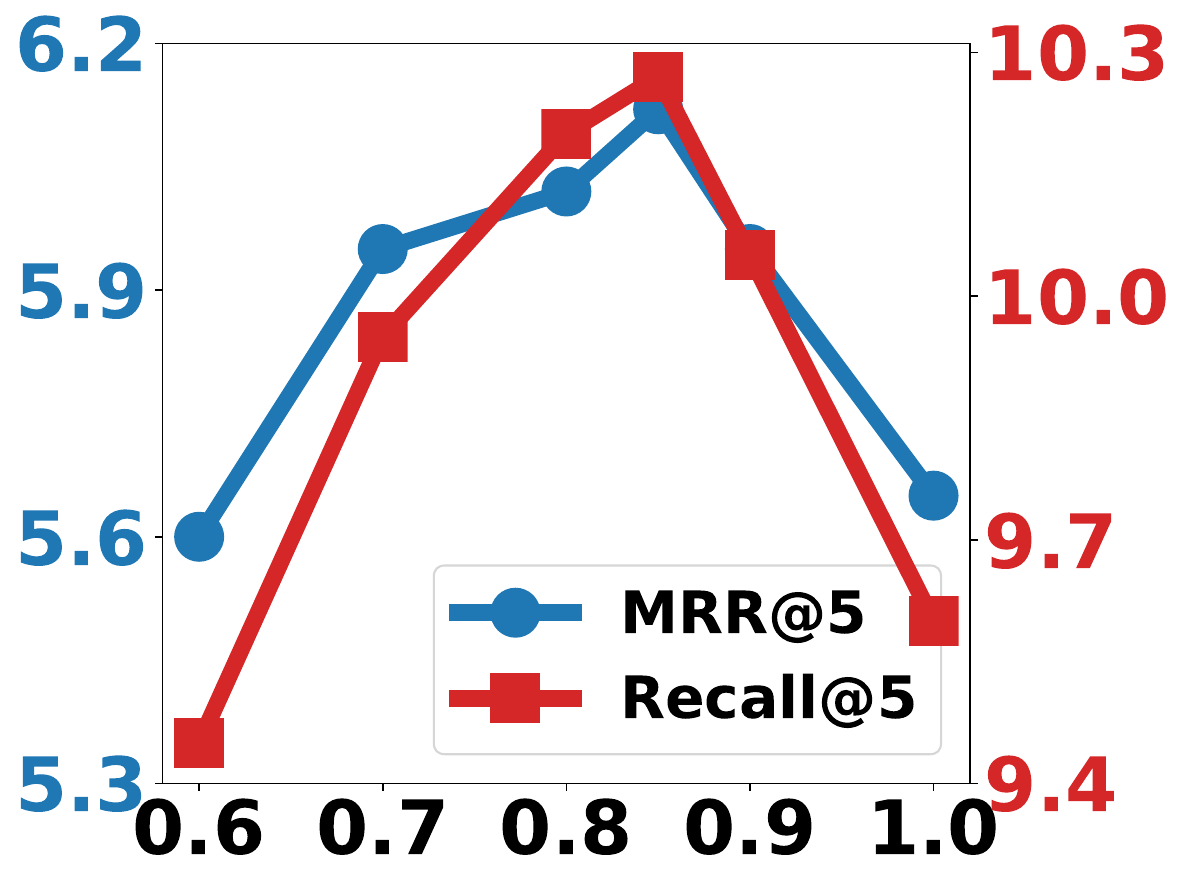}
  \includegraphics[width=0.24\linewidth]{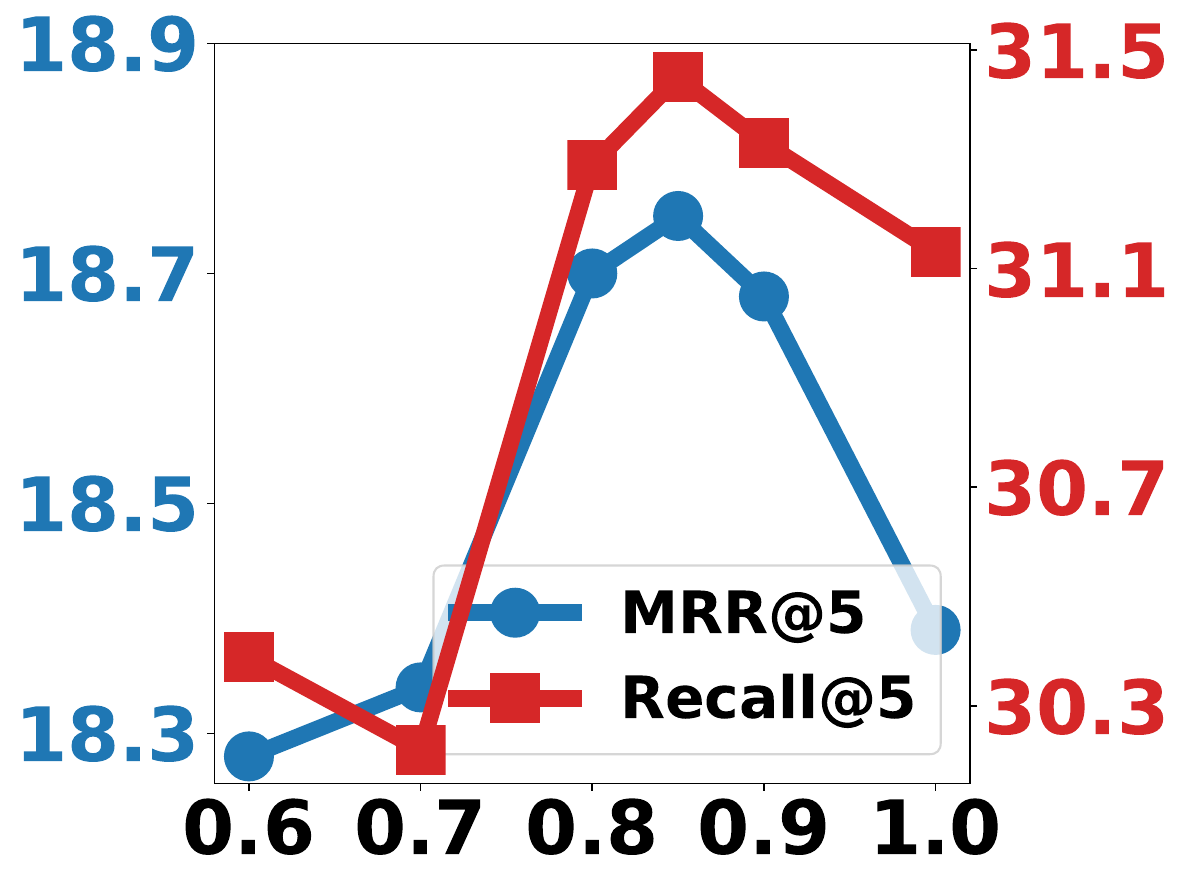}
  \includegraphics[width=0.24\linewidth]{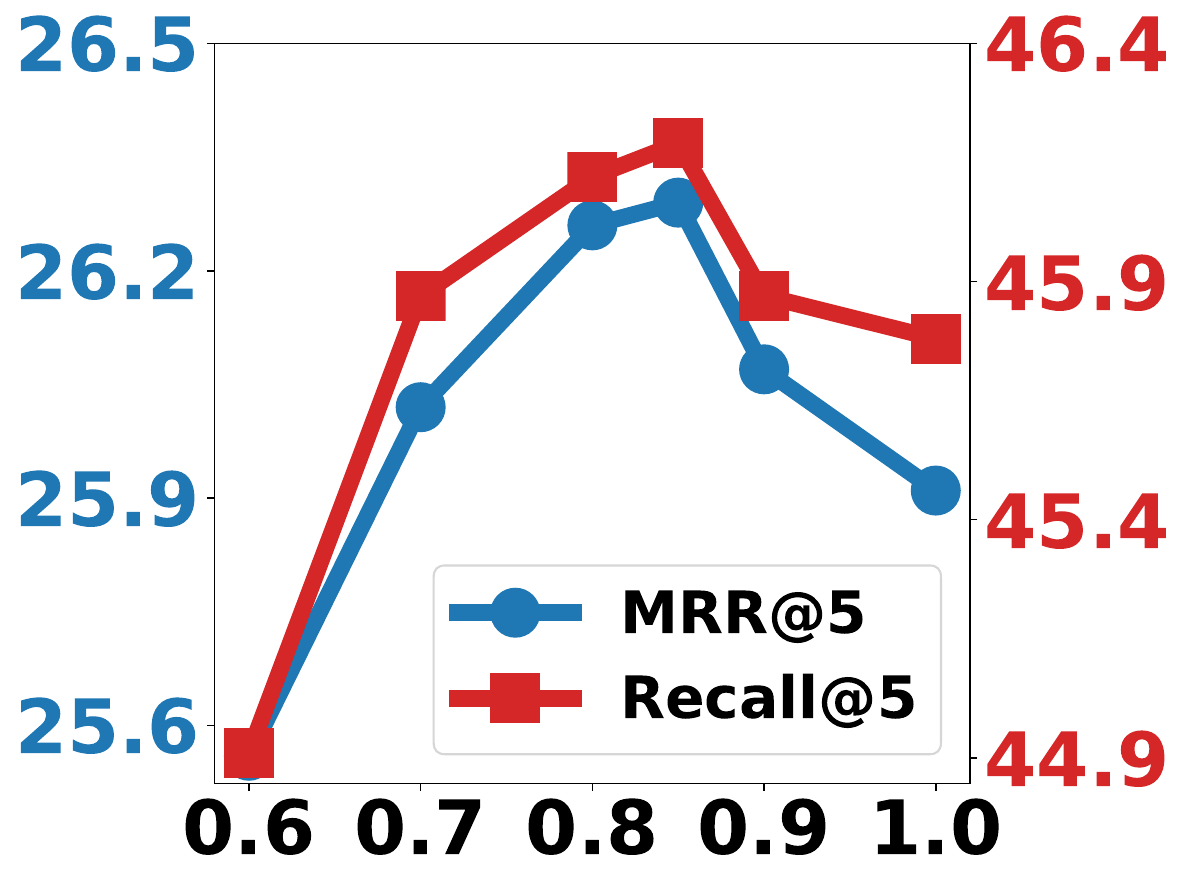}
  \includegraphics[width=0.24\linewidth]{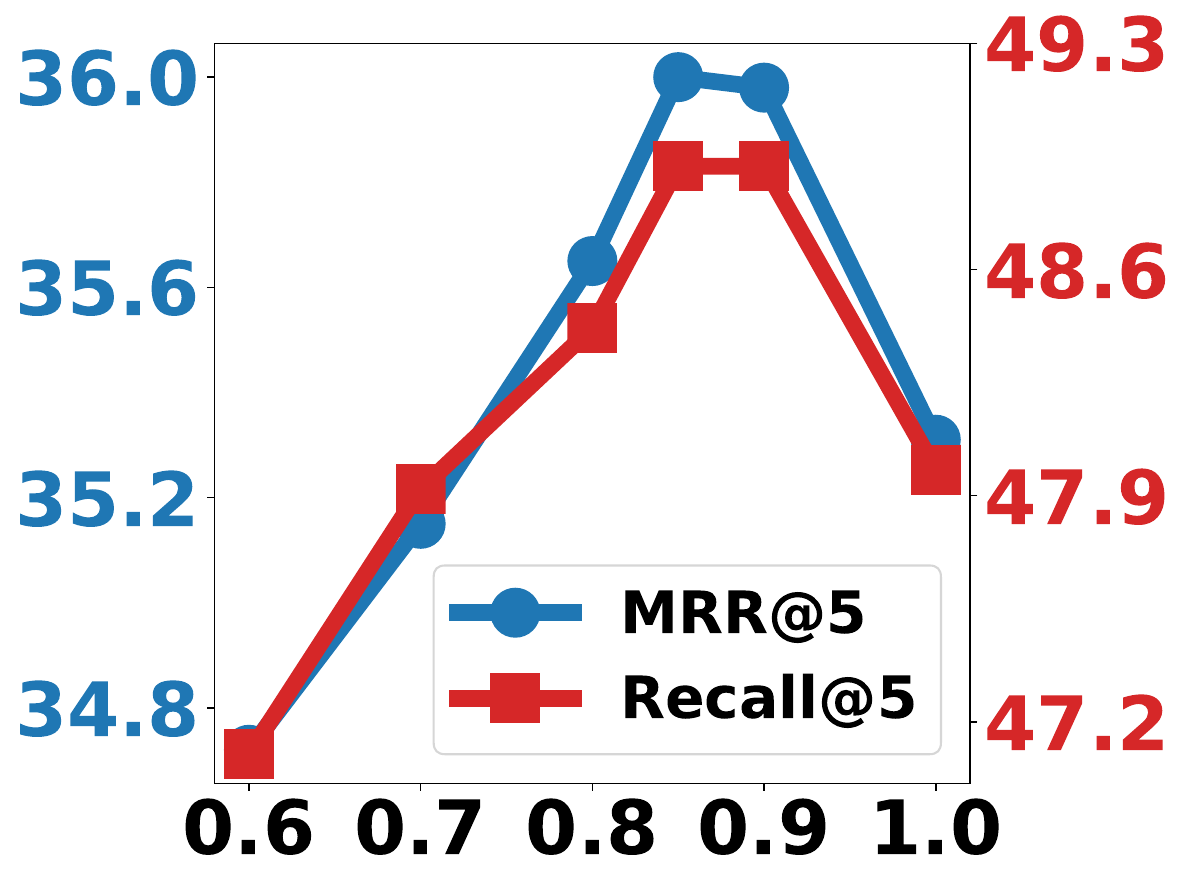}
  \caption{Impact of the similarity threshold $\tau$ on MRR@5 (left $y$-axis)
    and Recall@5 (right $y$-axis) on Nowplaying, Diginetica, Yoochoose, and
    RetailRocket (left to right).  Each curve shows performance as $\tau$ is
    varied over a grid; the reported ``optimal'' values correspond to
    converged $\tau$ after training with the learnable mechanism.}
  \label{fig:tau}
\end{figure*}

Fig.~\ref{fig:tau} shows performance variation across different $\tau$
values.  Performance is relatively stable over a broad range but optimal
thresholds vary by dataset.  Setting $\tau$ low (0.6) includes too many
dissimilar items and introduces noise; setting it high (1.0) excludes
relevant items and loses useful information.  The converged $\tau$ values
after training are 0.85, 0.82, and 0.88 for Diginetica, Yoochoose, and
RetailRocket, respectively; these values correspond to the result of
backpropagation-based optimisation, not manual selection.  The learnable
mechanism successfully finds or closely approximates the optimal values
identified through grid search, confirming its effectiveness.

\paragraph{Sensitivity to stride $k$.}

We additionally analyze the sensitivity of MIDA to the subsampling stride (k $\in$ \{1,2,4,8\}) using the GCE-GNN backbone on Diginetica. The MRR@5 results for (k=1,2,4,8) are (18.61$\pm$0.13), (18.71$\pm$0.07), (18.77$\pm$0.11), and (18.52$\pm$0.09), respectively, where the confidence ranges are computed over five runs. The results indicate that MIDA is robust to (k) within the range ([1,4]), with no statistically significant difference between (k=1) and (k=4). We therefore adopt (k=4) as the default, as it reduces the computational cost of similarity-matrix construction by a factor of (4) relative to (k=1).

\subsection{RQ5: Representation Alignment and Cluster Analysis}

To examine whether MIDA improves the decoder-side representation in the
model's native space, we evaluate session representation and target alignment in the
embedding space. For each test session, we compare GCE-GNN and MIDA by measuring how much closer the final session representation is to the ground-truth target item under MIDA than under GCE-GNN.
  Specifically, we compute the cosine similarity between the model's final session representation and the embedding of the ground-truth next
  item, and report the average difference between MIDA and GCE-GNN over the three corresponding session group.
  Long group have length greater than 10.  Long early-target group is
  long sessions where the historical item most similar to the target occurs in
  the first 30\% of the sequence. All group include all sessions

\begin{table}[t]
  \centering

  \caption{\revise{High-dimensional target-alignment gain of MIDA over GCE-GNN.}}
  \label{tab:alignment_gain}
  \begin{tabular}{lccc}
    \toprule
    Dataset & All & Long & Long Early \\
    \midrule
    Diginetica & +0.072 & +0.055 & +0.108 \\
    RetailRocket & +0.089 & +0.084 & +0.095 \\
    Yoochoose & +0.044 & +0.047 & +0.080 \\
    Nowplaying & +0.353 & +0.291 & +0.379 \\
    \bottomrule
  \end{tabular}
\end{table}

Table~\ref{tab:alignment_gain} shows that MIDA consistently moves the session
representation closer to the ground-truth target item in the high-dimensional
embedding space.  The gains are positive on all test sessions across all four
datasets and remain positive on long sessions. The performance is better
on long early-target sessions, where the item most aligned with the target
appears near the beginning of the sequence.  These results provide direct
quantitative evidence that MIDA improves target-oriented representation
alignment, especially in settings where the relevant signal may not be the most
recent interaction.

\paragraph{Multi-interest cluster analysis.}
To assess whether the learned clusters correspond to meaningful interest
groupings, we conduct a quantitative cluster analysis on the Diginetica test
set, which provides item-level category annotations.  For each session, we
record the number of non-singleton clusters formed under the converged
threshold $\tau^{\star}$, the average cluster size, and the category-level
purity of each cluster (the fraction of items within the same product
category).  We find that the average number of non-singleton clusters per
session is $1.83$ (std $0.71$), indicating that MIDA regularly identifies
multiple distinct interest groups within a single session.  The average cluster
category purity is $0.67$, substantially higher than the null expectation of
$0.26$ (uniform random assignment across Diginetica's item categories),
showing that the learned threshold groups items by semantic affinity
rather than arbitrarily.

To examine multi-interest structure on datasets without category annotations,
we compute an embedding-based coherence score: for each cluster, we measure the
average pairwise cosine similarity of the \emph{raw} item embeddings
$\{\mathbf{h}_i : i \in C_i\}$.  Across all four datasets, the intra-cluster
embedding similarity ($0.83\pm0.04$) consistently exceeds the across-cluster
baseline ($0.31\pm0.07$, i.e.\ random pairs from different clusters in the same
session), suggesting that the clustering module captures semantically coherent
groupings across domain types.

\paragraph{Analysis of learned gating bias.}
To examine the gating behavior described in Eq.~\ref{eq:gate}, we inspect
the sign and magnitude of $w_c$ on the Diginetica validation set.  The
converged value is $w_c = -0.47$ (averaged across 5 runs), consistent with the
expectation $w_c<0$ from the adaptive aggregation design (Section~\ref{subsec:aggregation}): sessions with larger average cluster
size (higher semantic cohesion) shift the gate toward the stability branch,
while sessions with scattered interest structures allow the plasticity branch
to dominate.  This data-driven sign is consistent across all four backbones
tested.

\paragraph{Gate value distribution across session types.}
To connect gate behavior to session structure directly, we measure the mean
gate value $\bar{g} = \frac{1}{d}\mathbf{1}^\top\mathbf{g}$ (averaged
element-wise over $d$ dimensions) for sessions partitioned by Low / Medium /
High semantic dispersion tertiles (defined in RQ6, Section~\ref{subsec:rq6}).
On Diginetica (GCE-GNN backbone), $\bar{g}$ is $0.41$, $0.52$, and $0.63$
for Low, Medium, and High dispersion groups respectively.  The gate becomes
increasingly plasticity-dominant as session diversity increases, consistent
with the gating formulation in Eq.~\ref{eq:gate}: dispersed sessions have
smaller $\bar{c}$, driving $g$ upward (toward plasticity) since $w_c<0$.

\paragraph{Learned $\tau$ and session dispersion.}
Although $\tau$ is a global learnable parameter shared across all sessions,
its effective impact varies with session dispersion.  We compute the
\emph{effective cluster coverage} $\rho(S) = \frac{1}{L}\sum_i |C_i|/L$ for
sessions partitioned by the same tertiles.  On Diginetica: $\rho = 0.72$,
$0.41$, $0.21$ for Low, Medium, and High dispersion groups.  At the converged
$\tau^{\star}=0.85$, MIDA forms dense clusters for cohesive sessions and
sparse/singleton clusters for diverse sessions, consistent with the clustering criteria defined in Proposition~\ref{prop:canonical_interpolation}.
The single global $\tau$ thus implements session-adaptive clustering implicitly,
without per-session hyperparameter tuning. This analysis provides multi-dataset quantitative evidence that the Dynamic
Multi-Interest Clustering module captures interpretable multi-interest structure.

\subsection{RQ6: Behavioral Analysis: When Does MIDA Help Most?}
\label{subsec:rq6}

Proposition~\ref{prop:canonical_interpolation} {characterizes how sparse and dense clustering shape MIDA’s stability and plasticity queries.}  This
suggests a testable hypothesis: MIDA should provide larger gains on sessions with
higher \emph{intent diversity}, where neither canonical decoder performs well.

For each test session $S$, we define the \emph{semantic dispersion}
$\delta(S) = 1 - \frac{2}{L(L-1)}\sum_{i<j}\mathbf{M}_{ij}$, where
$\mathbf{M}_{ij}$ is the pairwise cosine similarity of raw item embeddings.
High $\delta$ indicates diverse item semantics (low intra-session affinity);
low $\delta$ indicates a semantically cohesive session.

We partition each test set into tertiles by $\delta$ (Low, Medium, High
dispersion) and measure the MRR@5 improvement of MIDA (GCE-GNN backbone) over
the Naive baseline.  Results are summarised in Table~\ref{tab:dispersion}.

\begin{table}[t]
  \caption{MRR@5 improvement of GCE-GNN+MIDA over GCE-GNN Naive, grouped by
    session semantic dispersion $\delta$ (Low / Medium / High).  Higher
    dispersion means more diverse item semantics within the session.}
  \label{tab:dispersion}
  \centering
  \begin{tabular}{l|ccc}
    \toprule[1pt]
    \textbf{Dataset} & \textbf{Low $\delta$} & \textbf{Medium $\delta$} &
    \textbf{High $\delta$} \\
    \midrule
    Nowplaying   & +5.2\%  & +10.1\% & +19.7\% \\
    Diginetica   & +2.3\%  & +4.8\%  & +9.4\%  \\
    Yoochoose    & +0.7\%  & +1.9\%  & +4.6\%  \\
    RetailRocket & +3.9\%  & +8.1\%  & +16.3\% \\
    \bottomrule[1pt]
  \end{tabular}
\end{table}

\begin{itemize}[leftmargin=*]
  \item Across all four datasets, MIDA's improvement is \emph{monotonically
        increasing} with semantic dispersion: the more semantically diverse the
        session, the larger the gain over the Naive decoder.
  \item For highly dispersed sessions (High $\delta$), MIDA achieves gains of
        4.6\%--19.7\% MRR@5, compared to only 0.7\%--5.2\% on semantically
        cohesive sessions.
  \item This pattern aligns with Proposition~\ref{prop:canonical_interpolation}:
        in low-dispersion sessions, items are semantically similar, clusters
        trivially merge all items, and MIDA degrades gracefully toward
        average pooling.  In high-dispersion sessions, meaningful clustering
        structure emerges and the stability--plasticity decomposition captures
        distinct interests that canonical decoders collapse.
\end{itemize}

This behavioral analysis provides a consistent account across datasets of \emph{when}
MIDA confers the greatest benefit: sessions with heterogeneous user interests
that canonical decoders cannot disentangle.  It bridges Proposition~\ref{prop:canonical_interpolation}
(theoretical reduction) with the empirical long-session gains reported in RQ2,
since longer sessions tend to have higher semantic dispersion as more diverse
items are encountered.

\section{Conclusion}
\label{sec:conclusion}

We introduced MIDA, a lightweight decoder framework that addresses the decoder
bottleneck in session-based recommendation by modeling both stable and
evolving user intents.  MIDA uses dynamic multi-interest clustering with
learnable similarity thresholds to maintain semantic coherence over temporal
gaps, and an adaptive aggregation scheme that separates persistent preferences
from transient shifts via stability and plasticity branches.  These components
are fused through a gating mechanism conditioned on semantic cohesion signals,
enabling the model to overcome recency bias while remaining computationally
practical.

Experiments on four benchmark datasets show that MIDA consistently
outperforms strong baselines across encoder architectures, with particularly
significant improvements on longer and semantically diverse sessions where
existing methods struggle.  Two formal propositions characterize the
regulariser (Proposition~\ref{prop:regularizer_dense}) and {query construction under sparse and dense clustering }(Proposition~\ref{prop:canonical_interpolation}).  Ablation studies,
multi-dataset cluster coherence analysis, hyperparameter sensitivity, and a
behavioral analysis linking semantic dispersion to performance gain
(RQ6,~Table~\ref{tab:dispersion}) together provide a comprehensive
journal-level validation of the design.

\paragraph{Limitations.}
\revise{MIDA computes pairwise similarities within a session, which, even with
subsampling, may incur non-trivial overhead on extremely long sessions (e.g.\
$L\gg100$).  The stability and plasticity decomposition assumes a
two-component user intent model, which may not fully capture all
multi-interest structures in complex domains.  The cluster category purity
analysis is confined to Diginetica, which has available category annotations;
verifying multi-interest coherence on datasets without such metadata remains
an open challenge.}

\paragraph{Future work.}
Promising directions include: (i)~extending MIDA to cross-session scenarios
where interests evolve across multiple sessions; (ii)~exploring adaptive
subsampling strategies that dynamically adjust $k$ based on session length or
information density; (iii)~applying the decoder reformulation to
cross-domain recommendation and LLM-based session recommendation, where
decoder-side aggregation of context tokens faces analogous challenges.

\bibliographystyle{IEEEtran}
\bibliography{sample-base}

\begin{IEEEbiography}[{\includegraphics[width=1in,height=1.25in,clip,keepaspectratio]{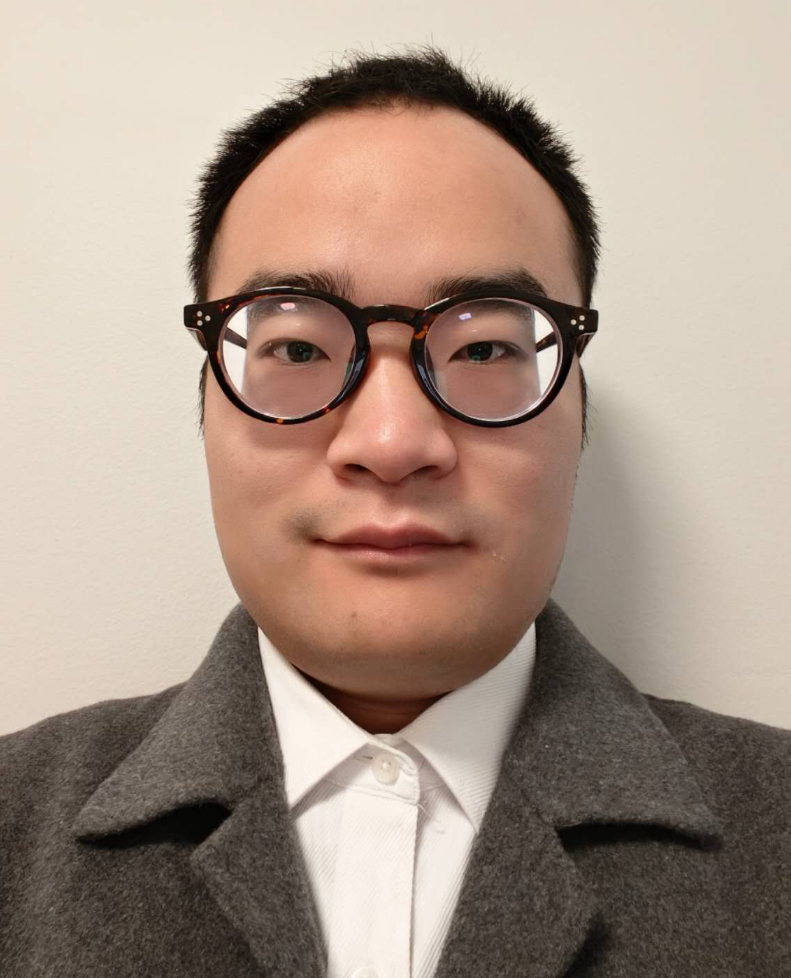}}]{Linjiang Guo} is currently pursuing the Ph.D. degree in the School of Computer Science at the University of Technology Sydney, under the joint supervision of Prof. Guandong Xu and Dr. Shiqing Wu. He previously received his master's degree from the University of Chinese Academy of Sciences. His research interests include temporal recommendation, sequential recommendation, and machine unlearning.

\end{IEEEbiography}

\begin{IEEEbiography}[{\includegraphics[width=1in,height=1.25in,clip,keepaspectratio]{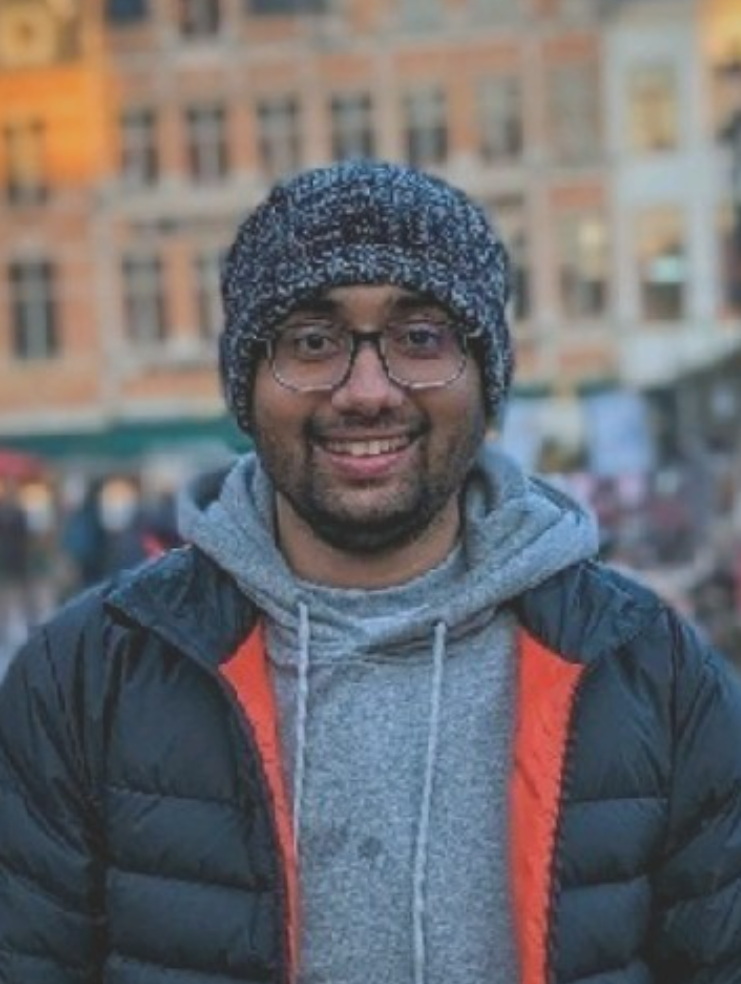}}]{Nitin Bisht} is a PhD student in the School of Computer Science at the University of Technology Sydney, Australia. He received his Master’s degree from the Indian Institute of Technology Kharagpur. His research interests include machine learning, statistical modeling, and recommendation systems, with a particular focus on developing reliable, fair, and efficient algorithms for large-scale user modeling and data-driven decision applications.
\end{IEEEbiography}

\begin{IEEEbiography}[{\includegraphics[width=1in,height=1.25in,clip,keepaspectratio]{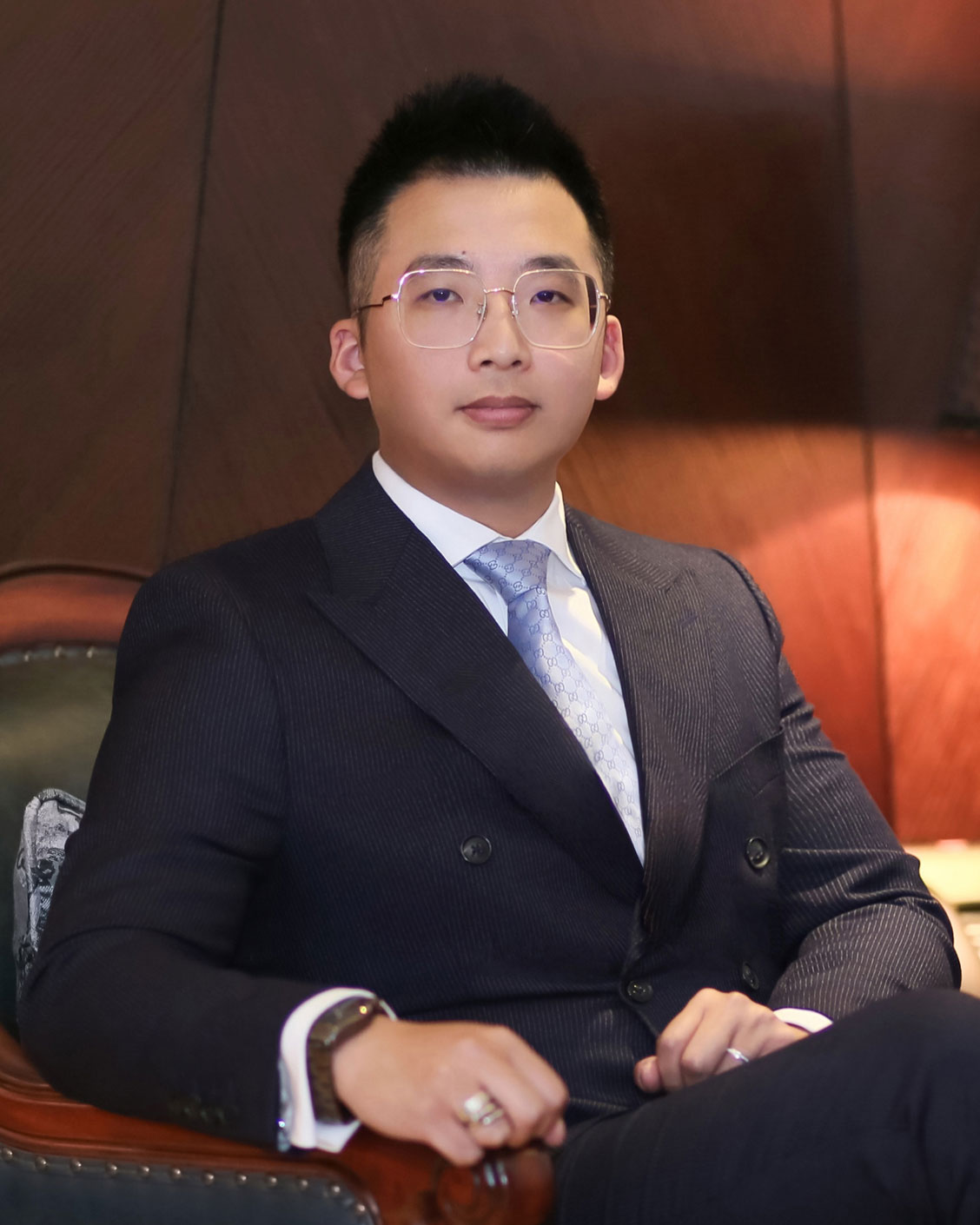}}]{Shiqing Wu} is an Assistant Professor in the Faculty of Data Science at City University of Macau, Macau SAR, China. Before that, he was a Postdoctoral Research Fellow in the School of Computer Science at the University of Technology Sydney, Australia, from 2022 to 2024. He received his PhD in Computer Science from the University of Tasmania, Australia, in 2022, and a joint B.Sc. in Computer Science from Auckland University of Technology, New Zealand, and China Jiliang University, China, in 2016. His research interests include agent-based modeling, social influence analysis, recommender systems, and reinforcement learning. He has published over 30 research papers in prestigious venues, including SIGIR, ICDM, IJCAI, AAMAS, TKDE, and TOIS. He is an Editorial Board Member of Information Processing \& Management. 
\end{IEEEbiography}

\begin{IEEEbiography}[{\includegraphics[width=1in,height=1.25in,clip,keepaspectratio]{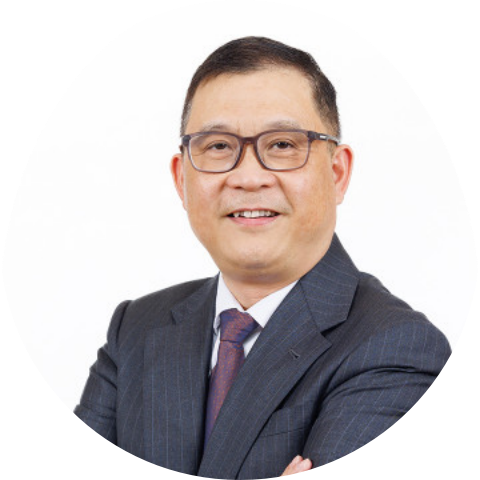}}]{Guandong Xu} (Senior Member, IEEE) is a Chair Professor of Artificial Intelligence at the Education University of Hong Kong (EdUHK). Before joining EdUHK, he was a full Professor in Data Science at the University of Technology Sydney. His research interests include data science, recommender systems, user modeling, and social computing. He has had 300+ publications in leading journals and conferences. He is the editor-in-chief of the Human-Centric Intelligent Systems and the assistant editor-in-chief of the World Wide Web Journal. He was elevated as a fellow of the Institute of Engineering and Technology (IET) and the Australian Computer Society (ACS) in 2022 and 2021, respectively.
\end{IEEEbiography}

\end{document}